\theoremstyle{plain}% Theorem-like structures provided by amsthm.sty
\theoremstyle{definition}
\theoremstyle{remark}
\newtheorem{remark}{Remark}
\DeclareMathOperator{\R}{\mathds{R}} % can be adjusted here
\appto\TPTnoteSettings{\figcaptionfont}
\newtheorem{assumption}{Assumption}
\DeclareMathOperator\indicator{I}
\DeclareMathOperator{\fp}{FP}
\DeclareMathOperator{\fn}{FN}
\DeclareMathOperator{\tp}{TP}
\DeclareMathOperator{\tn}{TN}
\DeclareMathOperator{\fdr}{FDR}
\DeclareMathOperator{\for}{FOR}
\DeclareMathOperator{\X}{\mathbf X}
\DeclareMathOperator{\xe}{\mathbf x}
\DeclareMathOperator{\y}{\mathbf y}
\DeclareMathOperator\I{\mathbf {I}}
\DeclareMathOperator\J{\mathbf {J}}
\DeclareMathOperator\cD{\mathcal {D}}
\DeclareMathOperator\N{\mathcal {N}}
\DeclareMathOperator{\D}{\mathbf D}
\DeclareMathOperator{\bd}{\mathbf d}
\DeclareMathOperator{\defeq}{\vcentcolon=}
\DeclareMathOperator{\dist}{dist}
\DeclareMathOperator{\cor}{cor}
\DeclareMathOperator{\E}{E}
\DeclareMathOperator*{\argmax}{arg\,max}
\newcommand{\wu}[1]{\overset {_\sim}  #1}
\newcommand{\pc}{\ensuremath{\%\Delta}\xspace}
\newcolumntype{m}{>{$}c<{$}}
\newcolumntype{L}[1]{>{\raggedright\let\newline\\\arraybackslash\hspace{0pt}}b{#1}}
\newcolumntype{C}[1]{>{\centering\let\newline\\\arraybackslash\hspace{0pt}}b{#1}}
\newcolumntype{R}[1]{>{\raggedleft\let\newline\\\arraybackslash\hspace{0pt}}b{#1}}
\begin{document}

%\articletype{ARTICLE TEMPLATE}% Specify the article type or omit as appropriate

\title{A statistical testing procedure for validating class labels}

\author{
\name{Melissa C. Key\textsuperscript{a}\thanks{CONTACT Melissa Key. Email: melissa.key.ctr@afit.edu} and Benzion Boukai\textsuperscript{b}}
\affil{\textsuperscript{a}Department of Biostatistics, Fairbanks School of Public Health, Indianapolis, Indiana; \textsuperscript{b} Department of Mathematical Sciences, Indiana University-Purdue University, Indianapolis, Indiana}
}

\maketitle

\begin{abstract}
	
	Motivated by an open problem of validating protein identities  in label-free shotgun proteomics work-flows, we present a testing procedure to validate class/protein labels using available measurements across instances/peptides.
	More generally, we present a solution to the problem of identifying instances that are deemed, based on some distance (or quasi-distance) measure, as outliers relative to the subset of instances assigned to the same class.
	The  proposed procedure is non-parametric and requires no specific distributional assumption on the measured distances.
	The only assumption underlying the testing procedure is that measured distances between instances within the same class are stochastically smaller than measured distances between instances from different classes.
	The test is shown to simultaneously control the Type I and Type II error probabilities whilst also controlling the overall error probability of the repeated testing invoked in the validation procedure of initial class labeling.
	The theoretical results are supplemented with results from an extensive numerical study, simulating a typical setup for labeling validation in proteomics work-flow applications.
	These results illustrate the applicability and viability of our method.
	Even with up to \SI{25}{\percent} of instances mislabeled, our testing procedure maintains a high specificity and greatly reduces the proportion of mislabeled instances.
\end{abstract}

\begin{keywords}
non-parametric; hypothesis testing; Bonferroni; binomial; machine learning; classification; proteomics
\end{keywords}

\section{Introduction}
The research presented in this paper is motivated by an open problem in the quantification of proteins in a label-free shotgun proteomics work-flow.
More generally, it presents a non-parametric solution to the problem of identifying instances that are outliers relative to the subset of instances assigned to the same class.
This serves as a proxy for finding errors in the data set: instances for which the class label is recorded incorrectly, or where the measurements for a particular instance are sufficiently inaccurate as to render them uninformative.

In label-free shotgun proteomics, the experimental units of interest (proteins) are not measured directly but are represented by measurements on one to 500+ enzymatically cleaved pieces known as peptides.
The amino acid sequences composing each peptide are not known apriori, but inferred based on algorithmic procedures acting on spectrum data from the mass spectrometer.
By removing inaccurate peptides (instances) from each protein (class), subsequent quantitative analyses which assume that all measurements are equally representative of the protein are thus more accurate and powerful.

A similar problem exists in classification theory.
It has been shown that classification models trained on data with labeling errors tend to be more complex and less accurate than models trained on data without labeling errors \citep{Quinlan1986,Zhu2004,Saez2014}.
Multiple algorithms to find and remove such data have been developed.
An excellent review of such procedures is presented in \citep{Frenay2014}.
While these procedures can be applied to the proteomics problem, it is important to note a distinction between the problem of interest and the one addressed by these algorithms.

In the general classification filtering problem, the aim is an improved ability to classify instances, and thus this is the primary criterion against which these algorithms are judged.
In other words, the accuracy of the filtered training set is secondary to the overall improved performance of the classification algorithm trained on this data.
In contrast, the overall accuracy of the filtered data set is of paramount interest in the proteomics filtering problem.
In a forthcoming paper, we will evaluate our proposed procedure against alternative algorithms from both proteomic and classification literature.
In this paper, our emphasis is on presenting the new testing procedure, and demonstrating its properties.

The paper is thus organized as follows.
Section \ref{sec:methodology} presents the theoretical basis for the algorithm as well as the procedure.
Section \ref{sec:simulation} uses a simulation study to demonstrate the effectiveness of the proposed algorithm.
Section \ref{sec:discussion} wraps up the paper with a discussion and concluding remarks.

\section{Methodology} \label{sec:methodology}
\subsection{The Basic Setup}

Consider a data set of $N$ instances, of which $ N_1,  N_2, \ldots, N_{K-1}$ are presumed to belong to classes $C_1,C_2,\ldots,C_{K-1}$ respectively,  with $ N_k > 1$ for $k = 1, \ldots, K-1$.  Let $C_K$ be a `mega-class' consisting of the $ N_K$ instances unassigned to a specific class or instances which are the sole representative of a class, such that $N_K = N - \sum_{k = 1}^{K-1}  N_k$.  For simplicity, we use the shorthand notation $i \in C_k$ to indicate an instance $i$ which belongs to class $C_k$ while $i \notin C_k$ indicates an instance which does not.  Let $\xe_i = (x_{i1},x_{i2},\ldots,x_{in})^\prime$ be the (vector) of observed intensities from instance $i$, (i = 1,\ldots,N), across $n$ (independent) samples.
The available data is thus $\X = (\xe_1,\xe_2,\ldots,\xe_N)$ is an $n \times N$ matrix of such observed intensities.
The `distance' between any two instances with observed intensities $\xe_i$ and $\xe_j$ can thus be measured using any standard distance or quasi-distance function,
\begin{equation}\label{eq:dist}
d_{ij} = \dist({\bf x}_i,{\bf x}_j) \ge 0.
\end{equation}

For instance, $d_{ij}$ could be a measure of the dissimilarity between peptides over the $n$ samples in the study.
In this case, one popular quasi-distance function is defined by the correlation between $\xe_i$ and $\xe_j$, $d_{ij}=1-r_{ij}$, where 
\begin{equation} \label{eq:cor-dist}
r_{ij}:= \cor(\xe_i, \xe_j)=\frac{\sum_{\ell = 1}^n \left(x_{i\ell} - \bar x_{i\cdot}\right)\left(x_{j\ell} - \bar x_{j\cdot}\right)}{\sqrt{\sum_{\ell = 1}^n \left(x_{i\ell} - \bar x_{i\cdot}\right)^2\sum_{\ell = 1}^n \left(x_{j\ell} - \bar x_{j\cdot}\right)^2}}.
\end{equation}
Here $\bar x_{i\cdot}=\sum_{\ell=1}^n x_{i\ell}/n$, for each $i=1, \dots, N$. 

Let $\D = \{d_{ij} : i,j = 1,\ldots N\}$ be the $N\times N$ (symmetric) matrix comprised of these between-instance observed distances.  Without loss of generality, we assume that the entries of $\D$ are ordered such that the first $ N_1$ entries belong to $C_1$, the next $ N_2$ entries belong to $C_2$, etc.
Accordingly, we partition $\D$ as
\begin{equation} \label{eq:D}
\D = \begin{bmatrix}
\D_{11} & \D_{12} & \ldots & \D_{1K}\\ 
\D_{21} & \D_{22} & \ldots & \D_{2K}\\
\vdots & \vdots & \ddots & \vdots\\
\D_{K1} & \D_{K2} & \ldots & \D_{KK}
\end{bmatrix},
\end{equation}
where $\D_{kk}$ is the $ N_k \times  N_k$ matrix of between-instance distances \emph{within} class $C_k$ and the elements of $\D_{k_1k_2}$  represent the distances between the $ N_{k_1}$ instances belonging to $C_{k_1}$ and the $ N_{k_2}$ instances belonging to $C_{k_2}$.
Note in particular that $\D_{k_1k_2}\equiv \D_{k_2k_1}^\prime$.

To begin with, consider at first class $C_1$ and the $N_1$ instances initially assigned to it.
For a fixed $i$, $i = 1, \ldots,  N_1$, let $i \in C_1$ be a given instance in class $C_1$ and set $\bd_i := (\bd_i^{(1)}, \bd_i^{(2)},\ldots,\bd_i^{(K)})$,  be the $i^{th}$ row of $\D$, where $\bd_i^{(1)} := (d_{i1}, \ldots, d_{i N_1})^\prime$ and for $k \ge 2$, $\bd_i^{(k)} := (d_{i(1 + \sum_{j=1}^{k-1} N_j)}, \ldots, d_{i(\sum_{j=1}^{k} N_j)})^\prime$.  
Clearly, $\bd_i^{(k)}$ is the $i^{th}$ row of $\D_{1k}$ for $k = 1,\ldots, K$.
For the class $C_1$, we consider the stochastic modeling of the elements of $\D_{11}, \D_{12}, \dots, \D_{1{K}}$.
We assume that for each instance  $i\in C_1$, the observed within-class distances from it to the other $N_1-1$ instances in $C_1$ are $i.i.d.$ random variable according to a class-specific distribution, $G(\cdot)$, so that  
\begin{equation}\label{eq:G_ind}
\left( d_{i1},\ldots,d_{i(i - 1)},d_{i(i + 1)},\ldots, d_{i N_1}\right) \underset{i.i.d.}{\sim} G_i(\cdot),
\end{equation}
(since $d_{ii} \equiv 0$).
Here, the c.d.fs $G_i(\cdot)$ are defined for each fixed $i\in C_1$, and any $j\in C_1$ as 
\begin{equation} \label{eq:G}
G_i(t) \equiv G(t):=\Pr(d_{ij}\leq t\, |\, i\in C_1, \ j\in C_1, \ j\neq i), \ \ \ \forall t \in {\R}.
\end{equation}
Our notation in (\ref{eq:G}) stresses that we are assuming that the distribution of distances from each individual instance to the remaining instances in $C_1$ is identical.
Similarly, the distances between the given instance, $i\in C_1$, and the $N_k$ instances in class $C_k, \ k=2, \dots, K$, are also $i.i.d.$ random variables according to some distribution $F^{(k)}(\cdot)$, so that   
\begin{equation} \label{eq:F_ind}
\left(d_{ij}\right) \underset{i.i.d.}{\sim} F^{(k)}(\cdot),\ \ \ \forall j \in C_{k},\ \ \ \ (j=1 \dots, N_k). 
\end{equation}
Here $F^{(2)}, F^{(3)}, \dots, F^{(K)}$ are $K-1$ distinct c.d.fs defined for each $i\in C_1$, and any $j\in C_k$ as
\begin{equation}\label{eq:Fk}
F^{(k)}(t)  = \Pr\left(d_{ij} \le t | \, i\in C_1, \ j\in C_k\right), \ \  \forall t \in {\R}.
\end{equation}
We assume throughout this work that $G(\cdot), F^{(2)}(\cdot), \ldots, F^{(K)}(\cdot)$ are continuous distributions with p.d.fs $g(\cdot), f^{(2)}(\cdot), \ldots, f^{(K)}(\cdot)$ respectively.
If we further assume that all the $ N_1$ instances from class $C_1$ are equally representative of the true intensity across all $n$ samples, we would expect that the distances between any two instances from within class $C_1$ are {\it stochastically smaller} than distances between instances from within $C_1$ and instances associated with class $C_{k}$ where $k \ne 1$.
Accordingly, we have
\begin{assumption}[\emph{Stochastic ordering}]\label{ass:F<G}
	For each $k, \ \ k=2, \dots, K$,
	\begin{equation}
	\Pr(d_{ij} \le t | \, i\in C_1, \ j\in C_k)\leq \Pr(d_{ij}\leq t\, |\, i\in C_1, \ j\in C_1, \ j\neq i)
	\end{equation}
	or equivalently,
	\begin{equation*}
	F^{(k)}(t) \le G(t)\ \ \forall t \in \R.
	\end{equation*}
\end{assumption}

In light of (\ref{eq:Fk}), the distribution of distances from the $i^{th}$ instance in $C_1$ to  any other random instance $J$, selected uniformly from among the $N-N_1$ instances not in $C_1$, is thus the mixture, 
\begin{equation} \label{eq:bar_F}
\bar F(t) := \Pr\left(d_{iJ} \le t |\, i \in C_1, J\notin C_1\, \right) = \frac{1}{N-N_1}\sum_{k = 2}^K N_{k} F^{(k)}(t),  
\end{equation}
where we have taken $\Pr(J\in C_k |\, J \notin C_1):=N_k/(N-N_1)$. Further, if $I$ is a randomly selected instance in class $C_1$, selected with probability $\Pr(I=i|I\in C_1)=1/N_1$, for $i=1, 2, \dots, N_1$, then it follows that
\begin{equation}\label{eq:F_bar}
\begin{aligned}[b]
\Pr (d_{IJ}\leq t| I\in C_1, \ J\notin C_1) & = \sum_{i=1}^{N_1} \frac{1}{N_1}\Pr (d_{iJ}\leq t| i\in C_1, \ J\notin C_1) \\
& = \sum_{i=1}^{N_1} \frac{1}{N_1} \bar F(t) \\
& \equiv \bar F(t) 
\end{aligned}
\end{equation}
Similarly, if $I$ and $J$ represent two distinct instances, both randomly selected from $C_1$ with $\Pr (I=i, \ J=j| I\in C_1, \ J\in C_1)=1/N_1(N_1-1)$, then 
\begin{equation} \label{eq:G_bar}
\begin{aligned}[b]
\bar G(t) &:= \Pr (d_{IJ}\leq t| I\in C_1, \ J\in C_1, \ J\neq I) \\
& = \sum_{i = 1}^{N_1}\sum_{j = 1, j \ne i}^{N_1} \Pr (d_{ij}\leq t|  i\in C_1, \ j\in C_1, \ j \ne i) \\
& = \sum_{i = 1}^{N_1}\sum_{j = 1, j \ne i}^{N_1} \frac{1}{N_1(N_1 -1)} G_i(t) \\
& \equiv G(t).
\end{aligned}
\end{equation}
It follows by Assumption \ref{ass:F<G} that 
\begin{equation}\label{eq:F<G}
\bar F(t)\leq \bar G(t), \  \forall t \in {\R}. 
\end{equation}

Now, for any $t \in \R$ define 
\begin{equation}\label{eq:psi}
\psi(t):= \bar G^{-1}(1-\bar F(t)). 
\end{equation}

It can be easily verified that the function $h(t):=\psi(t)-t$ has a uniqe solution, $t^*$,  such that $t^* = \psi(t^*)$ and 
\begin{equation}\label{eq:tau}
\bar G(t^*)  = 1 - \bar F(t^*) \defeq \tau.
\end{equation}
By Assumption \ref{ass:F<G} and (\ref{eq:F<G}), it follows that $\bar G(t^*) = \tau > 0.5$. 
As we will see below, the value of $t^*$ serves as a cut-off point to differentiate between the distribution governing distances between instances in class $C_1$ and the distribution of distances going from instances in $C_1$ to all remaining instances.

\subsection{The Testing Procedure}

\subsubsection{Constructing the test}

Consider at first class $C_1$ and the $N_1$ instances initially assigned to it.
Based on the available data we are interested in constructing a testing procedure for determining whether or not a given instance that was assigned to class $C_1$ should be retained or be removed from it (and potentially be reassigned to a different class). 
That is, for each selected instance from the list $i\in \{ 1, 2, \dots,  N_1\}$ of  instances labeled $C_1$, we consider the statistical test of the  hypothesis
\begin{equation} \label{H0i}
\mathcal H_0^{(i)}:   i \in C_1 \  \text{ (the initial label  is correct)}
\end{equation}
against
\begin{equation} \label{Hai}
\mathcal H_1^{(i)}:  i \notin C_1 \text{ (the initial label  is incorrect)},
\end{equation}
for $i=1, \dots, N_1$. 
The final result of these successive $N_1$ hypotheses tests  is  the set of all those instances in $C_1$ for which $\mathcal H_0^{(i)}:\ i\in C_1$ was rejected and thus, providing the set of those instances in $C_1$ which were deemed to have been mislabeled.
As we will see below, the successive testing procedure we propose is constructed so as to control the maximal probability of a type I error, while minimizing the probability of a type II error.

Towards that end, define  
\begin{equation} \label{eq:Zi}
Z_i\equiv\sum_{j\in C_1, \ j\neq i} \indicator[d_{ij}\leq t^*]=\sum_{j=1, \ j\neq i}^{ N_{1}} \indicator[d_{ij}\leq t^*]
\end{equation}
for each $i \in \{1, 2, \ldots, N_1\}$ where $t^*$ is defined by (\ref{eq:tau}) and $\indicator [{ A}]$ is the indicator function of the set ${ A}$.
In light of the relation (\ref{eq:F<G}), $Z_i$ will serve as a test statistic for the above hypotheses.
The distribution of $Z_i$ under both the null and alternative hypotheses can be explicitly defined as Binomial random variables, as is presented in the following lemma (proof  omitted).
\noindent \begin{restatable}{lemma}{nulldist}\label{lem:nulldist}
	\begin{itemize} Let $Z_i$ be as define in (\ref{eq:Zi}) above  with  $i=1, 2, \dots, N_1$,  then
		\item[a)] if $i\in C_1$, we have \ 
		$
		Z_i|_{\mathcal H_0^{(i)}}\sim Bin\left(N_1-1, \bar G(t^*)\right)\equiv Bin( N_1-1, \tau); 
		$
		
		\item[b)]  if $i\notin C_1$,  we have \
		$
		Z_i |_{\mathcal H_1^{(i)}} \sim Bin(\left(N_1 - 1, \bar F(t^*) \right) \equiv Bin(N_1 - 1, 1 - \tau). 
		$
	\end{itemize}
\end{restatable}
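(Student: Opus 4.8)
The plan is to recognise $Z_i$, defined in \eqref{eq:Zi}, as a sum of $N_1-1$ indicator (hence Bernoulli) random variables, and then to verify that under each hypothesis these indicators are independent and identically distributed. Once that is established, both parts follow at once from the elementary fact that a sum of $m$ independent Bernoulli$(p)$ variables is $Bin(m,p)$; the only quantities left to identify are the common success probability $p$ and the number of trials $m = N_1 - 1$.

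For part (a), I would condition on $\mathcal H_0^{(i)}$, under which $i \in C_1$. Then by \eqref{eq:G_ind}--\eqref{eq:G} the $N_1-1$ distances $(d_{ij})_{j \in C_1,\, j \neq i}$ are i.i.d.\ with continuous c.d.f.\ $G$, so each summand $\indicator[d_{ij} \le t^*]$ is Bernoulli with success probability $\Pr(d_{ij} \le t^*) = G(t^*)$, and the summands inherit mutual independence from that of the $d_{ij}$. Since $\bar G \equiv G$ by \eqref{eq:G_bar} and $\bar G(t^*) = \tau$ by the defining relation \eqref{eq:tau}, the per-trial probability is exactly $\tau$, yielding $Z_i \mid \mathcal H_0^{(i)} \sim Bin(N_1-1, \tau)$.

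For part (b), I would condition on $\mathcal H_1^{(i)}$, so that $i \notin C_1$ while the remaining $N_1-1$ instances are genuine members of $C_1$; each $d_{ij}$ is then a between-class distance. Exploiting the symmetry $d_{ij} = d_{ji}$ of \eqref{eq:dist} together with the mixture representation \eqref{eq:bar_F}--\eqref{eq:F_bar}, the distance from the out-of-class instance $i$ to a $C_1$ member $j$ has c.d.f.\ $\bar F$, so each $\indicator[d_{ij} \le t^*]$ is Bernoulli with parameter $\bar F(t^*)$, which equals $1 - \tau$ by \eqref{eq:tau}. With the $N_1 - 1$ indicators again independent, this gives $Z_i \mid \mathcal H_1^{(i)} \sim Bin(N_1-1, 1-\tau)$.

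The transition from i.i.d.\ Bernoulli summands to the Binomial law is routine; the step demanding the most care is the alternative case, where one must argue both that the correct per-trial probability is the mixture value $\bar F(t^*)$ rather than a single class-to-class distribution $F^{(k)}(t^*)$ (this is where the symmetry of the distance and the averaging in \eqref{eq:bar_F} enter, effectively treating a mislabeled instance as distributionally equivalent to a uniformly chosen out-of-class instance), and that the $N_1-1$ indicator summands retain their mutual independence. Securing these two points fixes the Bernoulli$(1-\tau)$ structure and completes the argument.
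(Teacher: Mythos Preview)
The paper omits the proof of this lemma entirely (it states ``proof omitted'' just before the lemma), so there is no argument to compare against. Your approach---writing $Z_i$ as a sum of $N_1-1$ Bernoulli indicators and reading off the common success probability from the assumed c.d.f.s---is the natural and essentially unique route, and part (a) goes through cleanly from the i.i.d.\ assumption \eqref{eq:G_ind} together with $\bar G\equiv G$ and $\bar G(t^*)=\tau$.

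You are right to flag the delicate step in part (b), and your caveat is well placed. If the mislabeled instance $i$ actually belongs to a \emph{specific} class $C_k$, then by symmetry and \eqref{eq:Fk} each $d_{ij}$ with $j\in C_1$ has c.d.f.\ $F^{(k)}$, so that conditionally $Z_i\sim Bin(N_1-1,F^{(k)}(t^*))$ rather than $Bin(N_1-1,\bar F(t^*))$. The stated form therefore implicitly treats $i$ under $\mathcal H_1^{(i)}$ as drawn uniformly from the $N-N_1$ out-of-class instances, so that its class is random with weights $N_k/(N-N_1)$; only then is the marginal per-trial probability $\bar F(t^*)=1-\tau$. Even granting this, note that after averaging over the latent class of $i$ the $N_1-1$ indicators are exchangeable but not independent (they share the common latent parameter $F^{(k)}(t^*)$), so the exact $Bin(N_1-1,1-\tau)$ law holds only if the $F^{(k)}(t^*)$ all coincide with $\bar F(t^*)$, or as an approximation otherwise. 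Your sketch is as complete as the lemma's statement permits; the residual looseness lies in the paper's modeling rather than in your argument.
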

  
Accordingly, the statistical test we propose will reject the  null hypothesis $\mathcal H_0^{(i)}: i\in C_1$ in favor of $\mathcal H_1^{(i)}:\ i\notin C_1$ for small values of $Z_i$, say if $Z_i \leq a_{\alpha}$ for some suitable critical value $a_{\alpha}$ (to be explicitly determined below) which should satisfy, 
\begin{equation}\label{eq:alpha}
\hat \alpha:= \Pr\left(Z_i \leq a_{\alpha}\middle|\mathcal H_0^{(i)}\right)= \Pr\left(Z_i \leq a_{\alpha}\middle|\, \tau \right) \leq \alpha, 
\end{equation}
for each $i=1, \dots, N_1$ and some fixed (and small) statistical error level $\alpha \in (0,0.5)$.
The constant $a_{\alpha}$ is the (appropriately calculated)  $\alpha^{th}$ percentile of the $Bin( N_1-1, \tau)$  distribution.  
That is, if $b(k, n, p)$ denotes  the  c.d.f. of a $Bin(n, p)$ distribution, 
then for  given $\alpha$ and $\tau$, the value $a_{\alpha}$ is determined so as 
\begin{equation}\label{aa}
a_\alpha = \argmax_{k=0, \dots N_1-1}\big\{\! b(k,  N_1-1, \tau) \leq \alpha\big\}. 
\end{equation}

The final result of this repeated testing procedure is  given by the set of all instances in $C_1$ for which $\mathcal H_0^{(i)}: \ i\in C_1$ was rejected, 
\begin{equation*}
\mathcal R_{\alpha} := \{i;\ \  i=1, \dots,  N_1 : Z_i \leq a_{\alpha} \}, 
\end{equation*}
providing the set of those instances in $C_1$ for which the binomial threshold is achieved and therefore have been deemed mislabeled.
Similarly, 
\begin{equation*}
\mathcal A_{\alpha} := \{i;\ \  i=1, \dots,  N_1 : Z_i > a_{\alpha} \} = \{1, \ldots, N_1\} \setminus \mathcal R_{\alpha}, 
\end{equation*}
provides the set of instances correctly identified in $C_1$.
It remains only to determine the optimal value of $a_\alpha$ for the test.

\subsubsection{Controlling type I and type II  errors}\label{sec:typeIerrors}
With $\mathcal H_0^{(i)}$ and $\mathcal H_1^{(i)}$ as are given in (\ref{H0i}) and (\ref{Hai}), let 
\begin{equation}
\mathcal H^*_0= \bigcap_{i=1}^{ N_1} \mathcal H_0^{(i)}\  \  \text{and} \ \ \ \tilde{\mathcal H}^*_1= \bigcup_{i=1}^{ N_1} \mathcal H_1^{(i)}. \label{eq:bigH}
\end{equation}
The hypothesis $\mathcal H^*_0$ above states that all the instances in $C_1$ are correctly identified, whereas  $\mathcal H^*_1$ is the hypothesis  that at least one of the instances in $C_1$ is misidentified. We denote by $R=|\mathcal R_{\alpha}|$ the cardinality of the set $\mathcal R_{\alpha}$,  
\[
R=\sum_{i=1}^{ N_1}\indicator[ Z_i\leq  a_{\alpha}]
\]
so that $R$ is a random variable taking values over $\{0, \, 1, \dots,  N_1\}$.
Note trivially that $N_1-R\equiv |\mathcal A_{\alpha}|$. 

We consider the ``global'' test which rejects $\mathcal H_0^*$ in (\ref{eq:bigH})  if for at least  one $i, \ i=1, \dots,  N_1$, $ Z_i \leq a_{\alpha}$ or equivalently, if $\{R>0\}$.  The probability of a type I error  associated with this ``global'' test is therefore
\begin{equation} \label{eq:alpha0}
\begin{aligned}
\alpha': = &  \Pr\left(R >0\mid  \mathcal H_0^*\right) = \Pr\left(\bigcup_{i=1}^{ N_1}\{Z_i\leq a_{\alpha}\} \mid \mathcal H_0^*\right) \\
\leq & \sum_{i=1}^{N_1} \Pr\left(Z_i\leq a_{\alpha} \mid \mathcal H_0^{(i)}\right)= N_1\hat \alpha \leq N_1 \alpha, 
\end{aligned}
\end{equation}
using the Bonferroni inequality since by (\ref{eq:alpha})-(\ref{aa}),  $\hat \alpha\leq \alpha$.
The calculations for $\alpha^\prime$, can be controlled  by taking $\alpha=\alpha_0/N_1$ for some $\alpha_0$, to ensure that $\alpha^\prime \leq  \alpha_0$ and that $\hat \alpha\leq \alpha_0/N_1$. 

Note that if $\{Z_i, Z_2, \dots, Z_{ N_1}\}$ were to be independent or associated random variables \citep{Maxiumum1961} then under $\mathcal H_0^*$,  $\indicator[ Z_i\leq  a_{\alpha}\, ]\sim Bin(1, \hat \alpha), \ i=1, 2, \dots, N_1$ , and  $R\sim Bin(N_1, \hat \alpha)$.
In this case, 
\[
\Pr\left(R=0\, \mid\, \mathcal H_0^*\right)= \left(1-\hat \alpha\right)^{N_1}\geq \left(1-\frac{\alpha_0}{N_1}\right)^{N_1}.
\]
It follows for sufficiently large $N_1$ (as $ N_1\to \infty$), that
\begin{equation}\label{eq:ind-alpha}
\alpha'= 1- \Pr(R =0 \mid \mathcal H_0^*)\to 1-e^{-\alpha_0}<\alpha_0.
\end{equation}

The distribution of $Z_i$ under the alternative hypothesis is explicitly available (see Lemma \ref{lem:nulldist} (b)), so the type II error rate of the procedure can also be explicitly controlled.
In fact, the symmetry of the Binomal distribution about $\tau$ and $1 - \tau$, with $\tau>0.5$, can be exploited to show that when $\tau$ is sufficiently large ($\tau>\tau^*$, say), the type I error rate, $\alpha$,  serves also as a bound on the type II error rate, $\beta$, where we define
\begin{equation}
\beta := \Pr(\text{Type II error}) = \Pr(Z_i \ge a_\alpha | \mathcal H_1^{(i)}) .
\end{equation}
The conditions under which $\beta\leq  \alpha$ holds are provided  in the following lemma whose proof is given in the appendix below.
\begin{restatable}{lemma}{aControlsb}\label{lem:aControlsb}
	Suppose for a fixed $\alpha<0.5$, the test of the hypotheses $\mathcal H_0^{(i)}$ versus $\mathcal H_1^{(i)}$ as are given  in (\ref{H0i})-(\ref{Hai}) and is conducted using the  cut-off $a_\alpha$  in (\ref{eq:alpha})-(\ref{aa}). Then 
	\begin{equation*}
	\beta=\Pr(Z_i > a_\alpha | 1 - \tau)\leq \Pr(Z_i \leq a_\alpha | \tau)\leq \alpha
	\end{equation*} 
	provided that  $\tau > \tau^*$ for some $\tau^*>0.5$ in which case,  $a_\alpha \ge \frac {N_1-1}2$. 
\end{restatable}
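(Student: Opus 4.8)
The plan is to exploit the exact distributional symmetry of the binomial law together with the monotonicity of its c.d.f. Write $n := N_1 - 1$ for brevity, and let $X \sim Bin(n, \tau)$ denote the null distribution of $Z_i$ from Lemma \ref{lem:nulldist}(a), while under the alternative $Z_i \sim Bin(n, 1-\tau)$ by Lemma \ref{lem:nulldist}(b). The first step is to record the elementary identity $Bin(n, 1-\tau) \stackrel{d}{=} n - Bin(n, \tau)$, which follows by matching point masses $\binom{n}{k}(1-\tau)^k \tau^{n-k} = \binom{n}{n-k}\tau^{n-k}(1-\tau)^k$. Using it I would rewrite the Type II error as
\begin{equation*}
\beta = \Pr(Z_i > a_\alpha \mid 1 - \tau) = \Pr(n - X > a_\alpha) = \Pr(X \leq n - a_\alpha - 1),
\end{equation*}
so that both $\beta$ and $\hat\alpha = \Pr(X \leq a_\alpha)$ become left-tail probabilities of the \emph{same} random variable $X$.

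Since $k \mapsto \Pr(X \leq k)$ is nondecreasing, the desired bound $\beta \leq \hat\alpha$ then reduces to the purely arithmetic comparison of the arguments, $n - a_\alpha - 1 \leq a_\alpha$, i.e. $a_\alpha \geq (n-1)/2$. The second step is therefore to observe that the hypothesis $a_\alpha \geq (N_1-1)/2 = n/2$ stated in the lemma is slightly stronger than this and hence sufficient; chaining it with $\hat\alpha \leq \alpha$ from (\ref{eq:alpha})--(\ref{aa}) yields the full conclusion $\beta \leq \hat\alpha \leq \alpha$.

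The remaining, and main, work is to produce the threshold $\tau^*$ and to verify $\tau^* > 0.5$. Because $a_\alpha$ is the largest index $k$ with $b(k, n, \tau) \leq \alpha$ and $b(\cdot, n, \tau)$ is nondecreasing in its first argument, the condition $a_\alpha \geq m$ is equivalent to $b(m, n, \tau) \leq \alpha$ for any integer $m$. Taking $m = \lceil n/2 \rceil$, I would regard $\tau \mapsto b(m, n, \tau)$ as a continuous, strictly decreasing function on $(0,1)$ running from $1$ down to $0$ (using the stochastic monotonicity of $Bin(n,\tau)$ in $\tau$), and define $\tau^*$ as the unique root of $b(m, n, \tau) = \alpha$. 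Then $\tau > \tau^*$ forces $b(m, n, \tau) < \alpha$, hence $a_\alpha \geq m \geq n/2$, which is exactly the condition required above.

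Finally, to confirm $\tau^* > 0.5$ I would evaluate the c.d.f. at $\tau = 0.5$: by the symmetry of $Bin(n, 1/2)$ one checks, treating even and odd $n$ separately, that $b(\lceil n/2 \rceil, n, 1/2) > 1/2 > \alpha$, so the root $\tau^*$ of $b = \alpha$ must lie strictly to the right of $0.5$. The only delicate point throughout is the bookkeeping with the integer ceiling and the strict-versus-weak inequalities when passing between ``$a_\alpha \geq m$'' and ``$b(m, n, \tau) \leq \alpha$''; the distributional symmetry and the two monotonicities do all the real lifting, so I expect no analytic obstacle beyond this discrete accounting (and the harmless degenerate case $N_1 = 2$).
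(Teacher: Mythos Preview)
Your proposal is correct and follows essentially the same route as the paper: both arguments hinge on the reflection identity $Bin(n,1-\tau)\stackrel{d}{=}n-Bin(n,\tau)$ to turn $\beta$ into a left-tail probability under the null, and then compare arguments via the monotonicity of $k\mapsto b(k,n,\tau)$. Your treatment is in fact more explicit than the paper's in one respect: you construct $\tau^*$ concretely as the unique root of $b(\lceil n/2\rceil,n,\tau)=\alpha$ and verify $\tau^*>0.5$ directly from the symmetry of $Bin(n,1/2)$, whereas the paper leaves $\tau^*$ essentially implicit and appeals to a stochastic-ordering reference for the monotonicity in $\tau$.
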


Lemma \ref{lem:aControlsb} states that when $a_\alpha \ge \frac{N_1-1}{2}$, the type I and type II error can be controlled simultaneously through $\alpha$. 
It can be similarly  shown that an analogous properties also exist when explicitly controlling the type II error rate.   
Accordingly, the algorithm thus has optimal behavior when $\tau$ is sufficiently far from 0.5 to ensure that $a_\alpha \ge \frac{N_1-1}{2}$.
For a given value of $N_1$ and $\alpha$,  the  bound $\tau^*$ on $\tau$ can be explicitly  calculated using polynomial solvers or the normal approximation to the binomial, depending on the magnitude of $N_1$.

\subsection{Estimation} \label{sec:estimation}

We note that both $\bar F$ and $\bar G$ are generally unknown (as are $t^*$ and $\psi$), but can easily be estimated non-parametrically from the available data by their respective empirical c.d.fs, for sufficiently large $N_1$ and $N-N_1$.
For each given instance $i\in C_1$, 
\begin{align}
\hat G_i(t) & := \frac1{ N_{1}-1}\sum_{j\in C_1, \ j\neq i}^{ N_{1}} \indicator[d_{ij}\leq t],&
\hat F_i(t) & :=\sum_{k = 2}^K   \frac{ N_{k}}{N -  N_1}\cdot\hat F_i^{(k)}(t) \label{eq:GF_estimators}
\end{align}
where, for each $k=2, 3, \dots, K$, 
\[
\hat F_i^{(k)}(t)  := \frac1{ N_{k}}\sum_{j \in C_{k}}\indicator[d_{ij}\leq t]. 
\]
Clearly, $\hat G_i(t)$ and $\hat F_i(t)$ are empirical c.d.fs for estimating, based on the $i^{th}$ instance, $\bar G(t)$ and $\bar F(t)$ respectively.
Accordingly, when combined, 
\begin{align} \label{eq:GFbar_estimators}
\hat{\bar F}(t) &= \frac1{ N_1}\sum_{i=1}^{ N_{1}} \hat F_{i}(t), & \hat{\bar G}(t)&= \frac1{N_{1}}\sum_{i=1}^{ N_{1}} \hat G_{i}(t),
\end{align}
are the estimators of $\bar F(t)$ and $\bar G(t)$, respectively. 
Further, in similarity to (\ref{eq:psi}), we set 
\begin{equation}\label{eq:psi_hat}
\hat \psi(t) := \hat{\bar G}^{^{-1}}(1-\hat{\bar F}(t)), 
\end{equation}
and we let $\hat t^*$ denote the ``solution'' of $\hat \psi (t^*_c)=t^*_c$; that is 
\begin{equation}\label{eq:t_estimator}
\hat t^*:=\inf_{t}\{\hat \psi (t)\leq t\}.
\end{equation}
Clearly, the value of $\tau$ in (\ref{eq:tau}) would be estimated by 
\begin{equation} \label{eq:tau_hat}
\hat {\tau}=\hat{\bar G}(\hat t^*).
\end{equation} 
Note that in view of (\ref{eq:GF_estimators}), $N_1 \hat \tau\equiv \sum_{i=1}^{N_1} \hat \tau_i$, with  $\hat \tau_i\equiv \hat G_i(\hat t^*)$ for $i=1, 2, \dots, N_1$.
With $\hat t^*$ as an estimate of $t^*$ in (\ref{eq:tau}), we have that $Z_i\equiv (N_1-1)\hat G_i(\hat t^*)$ and $\hat \tau_i\equiv Z_i/(N_1-1)$ and therefore an equivalent estimate of $\hat \tau$ is
\begin{equation}\label{eq:tau_hat2}
\hat \tau= \frac{1}{N_1}\sum_{i=1}^{N_1}\frac{Z_i}{N_1-1}.
\end{equation}
By Lemma \ref{lem:nulldist} (a), $\E\left[\hat\tau_i | \mathcal H_0^{(i)}\right] =(N_1-1) \tau$ and hence, $\hat \tau$ in (\ref{eq:tau_hat2}) is an unbaised estimator of $\tau$,  $	\E\left[\hat\tau | \mathcal H_0^*\right] = \tau$.

\section{ A Simulation Study} \label{sec:simulation}
Our simulation study is designed to mimic the conditions of a LC-MS/MS shotgun proteomics study.
In this light, we consider a set-up in which $N_1$ instances (peptides) belong to class/protein $C_1$, and $N_2$ instances model the peptides belonging to any other class/protein.
The distance between instances is measured using correlation distance, again mimicking a common way to measure similarity between peptides, although  similar results can be obtained using other (quasi-) distance metrics (e.g. Euclidean distance).

\subsection{The Simulation setup}

To establish some notation, suppose that $\y=(y_1, y_2, \dots, y_N)^\prime$  is a $N\times 1$ random vector having some joint distribution $H_N$. We assume, without loss of generality, that the values of $\y$ are standardized, so that $E(y_i)=0$ and $V(y_i)=1$, for each $i=1, 2, \dots, N$. We  denote by $\cD$ the corresponding correlation (covariance) matrix for $\y$, $\cD=\cor(\y, \y^\prime)$. To simplify, we assume that $K=2$ so that the $N$ instances are presumed to belong to either class $C_1$ or $C_2$. Accordingly, we partitioned $\y$  and $\cD$ as $\y=[\y_{1}^\prime, \y_2^\prime]^\prime$, with, $\y_1=(y_{1,1}, y_{1,2}, \dots, y_{1,N_{1}})^\prime$ and $\y_2=(y_{2,N_1+1}, y_{2,N_1+2}, \dots, y_{2,N_1+N_{2}})^\prime$, $N_1+N_2=N$, and 
\[
\cD= \begin{bmatrix}
\cD_{1,1} & \cD_{1,2} \\ 
\cD_{2,1} & \cD_{2,2} 
\end{bmatrix},
\]
with $\cD_{k,\ell}=\cor(\y_k, \y_\ell^\prime), \, k,\ell=1,2$.

As in Section \ref{sec:methodology},  let $\X$ denote the $n\times N$ data matrix of the observed intensities. We denote by $\wu{\xe}^{_\prime}_j:=(x_{j1}, x_{j2}, \dots, x_{jN})$ the $j^{th}$ row of $\X$, $j=1, 2, \dots, n$, and 
we assume that $\wu{\xe}^{_\prime}_1, \wu{\xe}^{_\prime}_2, \dots, \wu{\xe}^{_\prime}_n$are independent and identically distributed as $\y\sim H_N$. 

Using standard notation, we write, $\mathbf{1}_n=(1, 1, \dots, 1)^\prime$,  $\I_n$  for the $n\times n$ identity matrix and $\J_n={\mathbf{1}_n \mathbf{1}_n^\prime}$ for the $n\times n$ matrix of $1$s.
For the simulation studies we conducted, we took $H_N$ to be the $N$-variate normal distribution, so that  
\[
\wu{\xe}^{_\prime}_j\sim \N_{N} ({\mathbf{0}}, \, \cD), \ j=1, 2, \dots, n, \ \ \ i.i.d.,
\]
where 
\begin{align}
\cD_{1,1} &=(1-\rho_1)\I_{N_1}+\rho_1\J_{N_1} \label{eq:D11} \\ 
\cD_{2,2} &=(1-\rho_2)\I_{N_2}+\rho_2\J_{N_2} \label{eq:D22} \\ 
\cD_{2,1 }&=\cD^\prime_{1,2}=\rho_{12}{\mathbf{1}_{N_2} \mathbf{1^\prime}_{N_1}} %\label{eq:D21}
\end{align}
for $\boldsymbol \rho = (\rho_1, \rho_{12}, \rho_2)$ with $0\leq \rho_{12}\le\rho_2\le\rho_1<1$.

To allow for misclassification of instances, we included for a certain proportion $p$, some $m:=[pN_1]$ of the $N_1$ `observed' instances intensities from $C_1$ that were actually simulated with $\cD_{1,1}$ being replaced by $\cD^*_{1,1}=(1-\rho_2)\I_{N_1}+\rho_2\J_{N_1}$  in (\ref{eq:D11}) above.
Thus, $m$ is the number of misclassified instances among the $N_1$ instances that were initially labeled as belonging to $C_1$. 

\begin{remark} \label{rem:rho-proxy}
	In this simulation, $\rho_2$ reflects (as a proxy) the common characteristics of two mislabeled instances.
	When $\rho_2 = \rho_1$, distances between two mislabeled instances have the same distribution as two correctly labeled instances,   as would be the case for binary classification when $\rho_2 = \rho_1$.
	When $\rho_2 = \rho_{12}$,  distances for two mislabeled instances have the same distribution as a distances between a correctly labeled instance and a mislabeled instance.  % (i.e. one correctly labeled instance and one mislabeled instance).
	This would be the case if the probability that two mislabeled instances come from the same class is zero. 
\end{remark}

For each simulation run, we recorded $\hat \tau$, $\hat t^*$, and counted the number of true positives (TPs), true negatives (TNs), false positives (FPs), and false negatives (FNs), as defined in Table \ref{tab:contigency-table}.
From this data, we calculated the sensitivity, specificity, false discovery rate (FDR), false omission rate (FOR), and percent reduction in FOR (\pc) for each run; defined as follows:
\begin{table}
	\centering
	\caption{\label{tab:contigency-table} For a single run, each instance has one of four possible outcomes.  The notation for the total count of instances with each these outcomes in a single run.}
 	\begin{tabular}{cc|C{1in}C{1in}|c}
		& & \multicolumn{2}{c|}{Truth} & \\
		& & Correctly labeled & mislabeled & Total \\\hline 
		\multirow{2}{*}{\rotatebox{90}{Result}}
		& Keep \rule{0pt}{12pt}  & TN & FN & $N_1 - R$ \\[.7ex]
		& Remove & FP & TP & $R$ \\[.7ex]\hline
		& & $N_1 - m$ & $m$ & $N_1$\rule{0pt}{12pt}
	\end{tabular}
\end{table}

\begin{center}
	\begin{tabular}{r @{\hspace{.2cm}=\hspace{.2cm}} l p{3in} }
		Sensitivity & $\displaystyle \frac{\tp}{\tp + \fn}$ & Proportion of correctly removed instances out of all mislabeled instances. \\
		Specificity &  $\displaystyle \frac{\tn}{\tn + \fp}$ & Proportion of correctly retained instances out of all correctly labeled instances.\\
		FDR& $\displaystyle \frac{\fp}{\max(\tp + \fp, 1)}$ & Proportion of correctly removed instances out of all those removed.\\
		FOR & $\displaystyle \frac{\fn}{\max(\tn + \fn, 1)}$ & Proportion of correctly retained instances out of all those retained. \\
		\pc & $\displaystyle \left(1 - \frac{\for}{p}\right) \times 100 $ & Percent reduction in FOR relative to $p$.
	\end{tabular}
\end{center}
Each statistic was averaged over all 1000 runs.

\subsection{Simulation results}

 We conducted $B=1000$ simulation runs of the test procedure with $n=$10, 25, 50, 75, 100, 250, 500, 700; $N_1=$25, 50, 100, 500; $N_2=1000$; $\alpha_0=0.05$; and $\boldsymbol\rho =(\rho_1 = 0.5, \rho_{12} = 0.1, \rho_2 = 0.5)$, $(0.5, 0.1, 0.1)$, $(0.5, 0.2, 0.5)$, and $(0.5, 0.2, 0.2)$.
We also varied the value of $p$, the proportion of mislabeled instances such that  $p=0.0,\,0.05,\,0.1,\,0.2,\,0.25$.
In particular, $p = 0$ means no mislabeling and that the initial labeling is perfect.

\subsubsection{Results with no mislabeling (case $p=0$)} \label{sec:sim-no-mislabeling}
Simulations where $p = 0$ (i.e. no mislabeling) were used to illustrate the theoretical assumptions of the testing procedure, as this presents a case in which the global null hypothesis in (\ref{eq:bigH}) holds.
In this case in particular, the FDR measures the proportion of incorrectly rejected hypotheses in (\ref{H0i}) out of all rejected hypothesis tests, given that at least one hypothesis test was rejected.
\begin{remark}\label{rem:fdr-p0}
For $p = 0$, \emph{every} rejected hypothesis test in (\ref{H0i}) is incorrectly rejected.
Thus, the FDR is 1 if \emph{any} rejected hypothesis tests are observed, and zero otherwise (by the definition of the FDR).
Consequently, the average value of the FDR over all $B$ runs provides an estimate of $\alpha^\prime$.
\end{remark}

\begin{figure}
	\centering
	\subfloat[$\rho_{12}=0.1$]{\includegraphics[width = .49\textwidth]{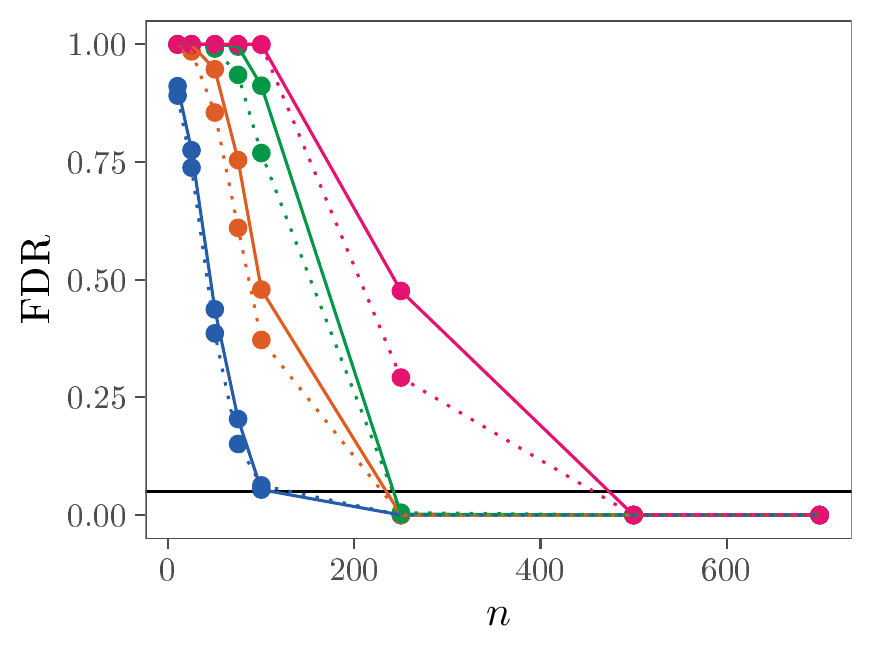} \label{subfig:rho12-1-p0-sim-fdr}}
	\subfloat[$\rho_{12}=0.2$]{\includegraphics[width = .49\textwidth]{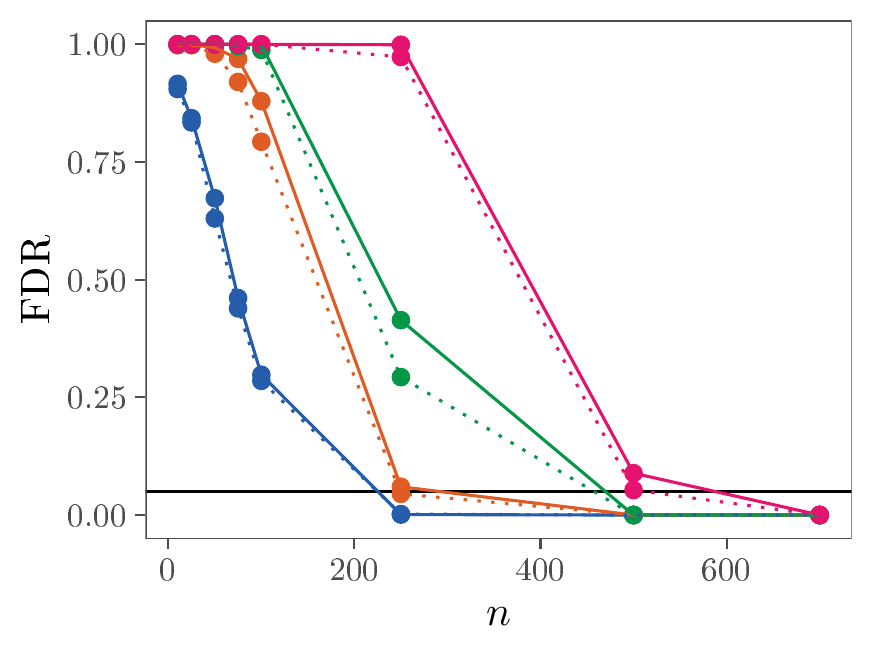} \label{subfig:rho12-2-p0-sim-fdr}}
	
	\includegraphics[width = \textwidth]{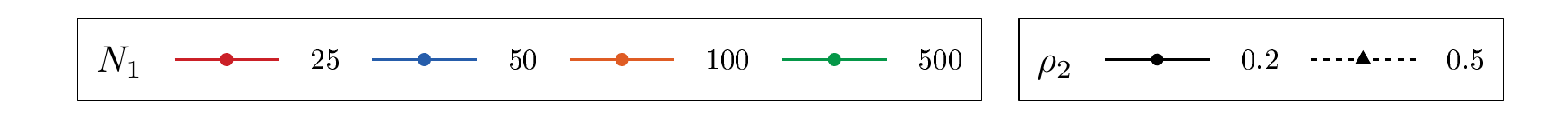}
	\caption{
		\label{fig:sim_p=0_fdr} The $\fdr$ as a function of $n$ and $N_1$ for \protect\subref{subfig:rho12-1-p0-sim-fdr} $\rho_{12} = 0.1$ and \protect\subref{subfig:rho12-2-p0-sim-fdr} $\rho_{12} = 0.2$ where $p = 0$. 
	}
\end{figure}

Figure \ref{fig:sim_p=0_fdr} and Table \ref{tab:sim1_p0} show the FDR for various $n$,  $N_1$, and $\rho_{12}$.  
As seen in the figures, the FDR converges to zero as $n$ increases for all values of $N_1$, but the convergence slows as $N_1$ increases. 
When $n$ is small, at least one instance was removed from almost all classes.
We attribute this behavior to the inherent correlation structure of the data.
This will be explored further in Section \ref{sec:sim-indepencence}.

\begin{table}
	\centering
	\begin{threeparttable}
		\caption{Simulation results for $p = 0$, $\rho_{12} = \rho_2 = 0.2$, and $\alpha_0 = 0.05$. 
			The sensitivity and FOR are excluded from the table because they are either undefined or constant when $p = 0$.\label{tab:sim1_p0}}
		\begin{tabular}{rcllcllcllcll}
			\toprule
			\multicolumn{1}{c}{}&\multicolumn{1}{c}{}&\multicolumn{2}{c}{$N_1 = 25$}&\multicolumn{1}{c}{}&\multicolumn{2}{c}{$N_1 = 50$}&\multicolumn{1}{c}{}&\multicolumn{2}{c}{$N_1 = 100$}&\multicolumn{1}{c}{}&\multicolumn{2}{c}{$N_1 = 500$}\\
			\cline{3-4} \cline{6-7} \cline{9-10} \cline{12-13}
			\multicolumn{1}{c}{$n$}&\multicolumn{1}{c}{}&\multicolumn{1}{c}{FDR\tnote{a}}&\multicolumn{1}{c}{Spec.}&\multicolumn{1}{c}{}&\multicolumn{1}{c}{FDR\tnote{a}}&\multicolumn{1}{c}{Spec.}&\multicolumn{1}{c}{}&\multicolumn{1}{c}{FDR\tnote{a}}&\multicolumn{1}{c}{Spec.}&\multicolumn{1}{c}{}&\multicolumn{1}{c}{FDR\tnote{a}}&\multicolumn{1}{c}{Spec.}\\
			\midrule
			10 && 0.916 & 0.929 && 1.000 & 0.880 && 1.000 & 0.834 && 1.000 & 0.737\\
			25 && 0.843 & 0.947 && 0.999 & 0.904 && 1.000 & 0.860 && 1.000 & 0.766\\
			50 && 0.673 & 0.965 && 0.993 & 0.933 && 1.000 & 0.894 && 1.000 & 0.807\\
			75 && 0.461 & 0.979 && 0.969 & 0.952 && 1.000 & 0.921 && 1.000 & 0.841\\
			100 && 0.298 & 0.987 && 0.879 & 0.967 && 0.999 & 0.943 && 1.000 & 0.872\\
			250 && 0.001 & 1.000 && 0.060 & 0.999 && 0.414 & 0.995 && 0.999 & 0.977\\
			500 && 0.000 & 1.000 && 0.000 & 1.000 && 0.000 & 1.000 && 0.089 & 1.000\\
			700 && 0.000 & 1.000 && 0.000 & 1.000 && 0.000 & 1.000 && 0.000 & 1.000\\
			\bottomrule
		\end{tabular}
		\begin{tablenotes}
			\item[a] See Remark 2
		\end{tablenotes}
	\end{threeparttable}
\end{table}

The specificity measures the ability to keep instances which are correctly labeled.
From Figure \ref{fig:sim_p=0_specificity} and Table \ref{tab:sim1_p0}, it can be seen that while almost all runs remove at least one instance (based on the FDR), most instances are retained.
With $n = 10$, $\rho_{12} = \rho_2 = 0.2$, and $N_1 = 25$, an average of 23.2 out of 25 were retained in each run.
This number decreased as $N_1$ increased, corresponding to the slower convergence of the FDR when $N_1$ is larger.
Even in this case, for $n = 10$, $\rho_{12} = \rho_2 = 0.2$ and $N_1 = 500$, an average of 368.5 instances are retained each time.

For $p = 0$, the sensitivity is undefined and the FOR is universally zero.
These statistics are relevant only when at least one mislabeled instance is present in the data set (i.e. $p > 0$) and therefore are omitted from Table \ref{tab:sim1_p0}.

\begin{figure}
	\centering
	\subfloat[$\rho_{12}=0.1$]{\includegraphics[width = .49\textwidth]{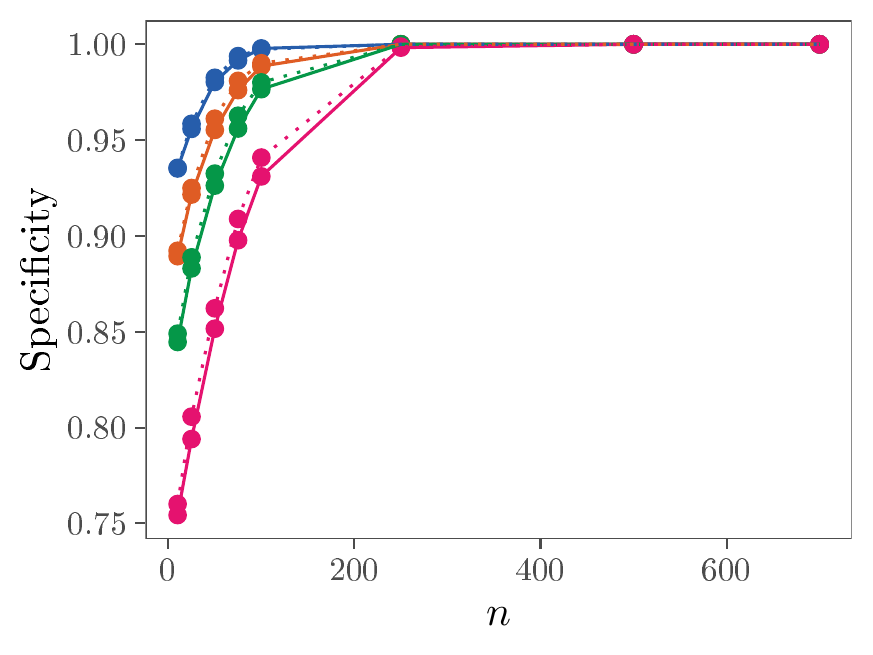} \label{subfig:rho12-1-p0-sim-specificity}}
	\subfloat[$\rho_{12}=0.2$]{\includegraphics[width = .49\textwidth]{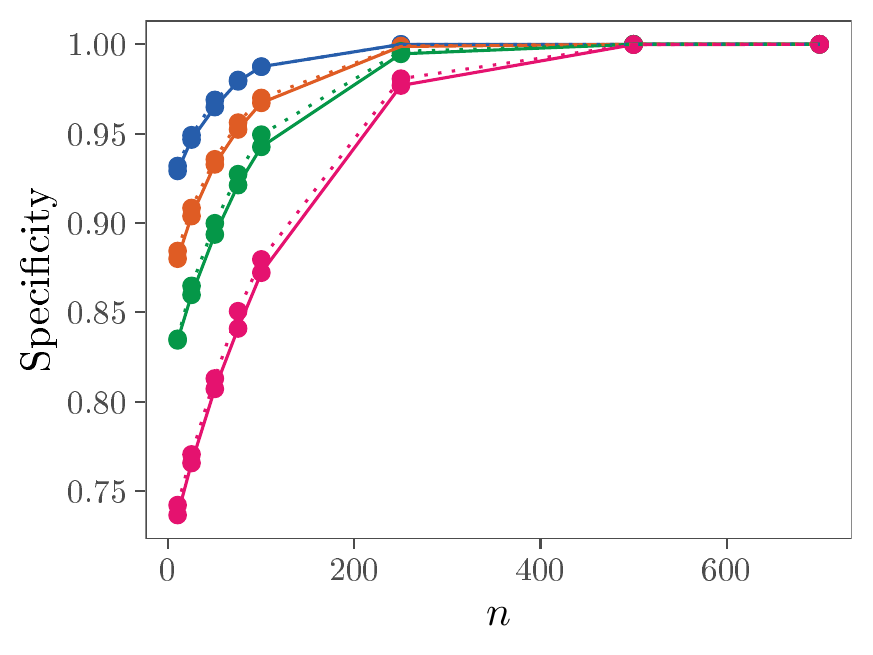} \label{subfig:rho12-2-p0-sim-specificity}}
	
	\includegraphics[width = \textwidth]{pg0_legend-only}
	\caption{
		\label{fig:sim_p=0_specificity} The specificity as a function of $n$, $N_1$ for \protect\subref{subfig:rho12-1-p0-sim-specificity} $\rho_{12} = 0.1$ and \protect\subref{subfig:rho12-2-p0-sim-specificity} $\rho_{12} = 0.2$ where $p = m = 0$.    
	}
\end{figure}

\begin{figure}
	\centering
	\includegraphics[width = .6\textwidth]{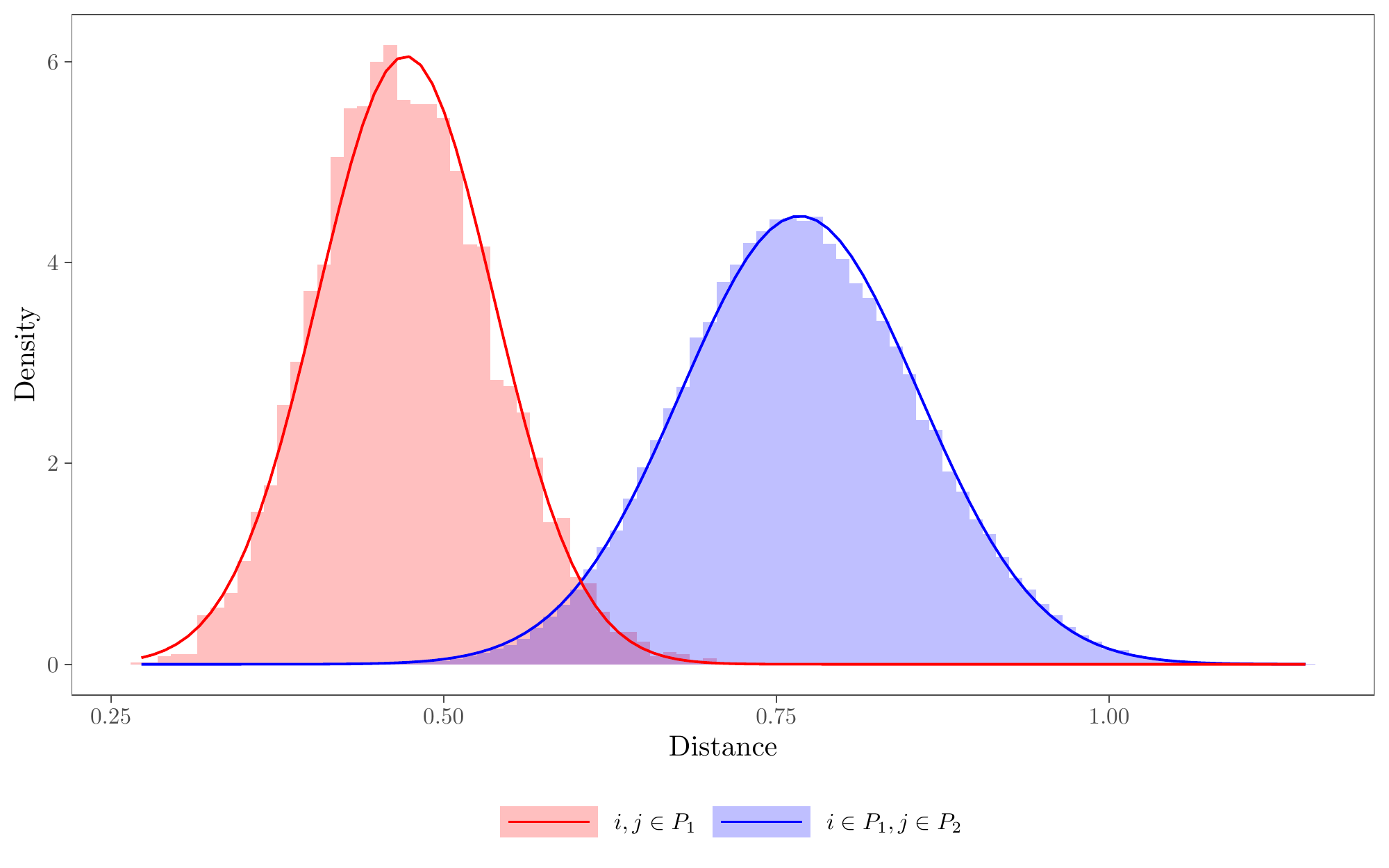}
	\caption{\label{fig:simulated-independent-histogram} A histogram of all distances within $C_1$ (red) and between instances in $C_1$ and those in $C_2$ (blue) for $p = 0$, $N_1 = 500$, $n = 100$, and $\rho_{12} = \rho_2 = 0.2$.  The lines are generated from a normal distribution with mean and standard deviations matched to the data.}
\end{figure}

\subsubsection{Behavior under artificially constructed independence} \label{sec:sim-indepencence}
In light of Remark \ref{rem:fdr-p0} and the likely  impact of the correlation structure present in the data on the FDR, we designed a simulation study to explore this effect.
In this study the ``distance'' matrices were artificially created in a manner which preserved dependence due to symmetry but removed all other dependencies across distances.

To simulate ``distance'' matrices in this case, we began by randomly generating a distance matrix using the original test procedure with $p = 0$ and $n = 100$, $N_1 = 500$, $N_2 = 1000$, and $\boldsymbol{\rho} = (0.5, 0.2, 0.2)$.
A normal distribution was fit to the within-$C_1$ distances and the $C_1$ to $C_2$ distances, as shown in Figure \ref{fig:simulated-independent-histogram}.
For $N_1 =$ 25, 50, 100, 500, 1000, 2000, 5000, and 12000 and $N_2 = 1000$,  these normal distributions were used to generate $B$ new distance matrices by drawing $d_{ij}$ for $1 \le i < j \le N$ as follows:
\[
d_{ij} \sim \begin{cases}
0                                & i = j \\
N(\mu = 0.523,\ \sigma = 0.0684) & 1 \le i < j \le N_1 \\
N(\mu = 0.771,\ \sigma = 0.0903)  & 1 \le i \le N_1 < j \le N \\
d_{ji}                            & i > j
\end{cases}.
\]

Figure \ref{fig:independent-simulation-convergence} shows the results of this procedure on five sets of $B = 1000$ runs.
As $N_1$ increased, the FDR also increased, but never surpassed the theoretical limit of $1 - e^{-0.05}$, consistent with the theoretical result given in (\ref{eq:ind-alpha}).

\begin{figure}
	\centering
	\includegraphics[width = .7\textwidth]{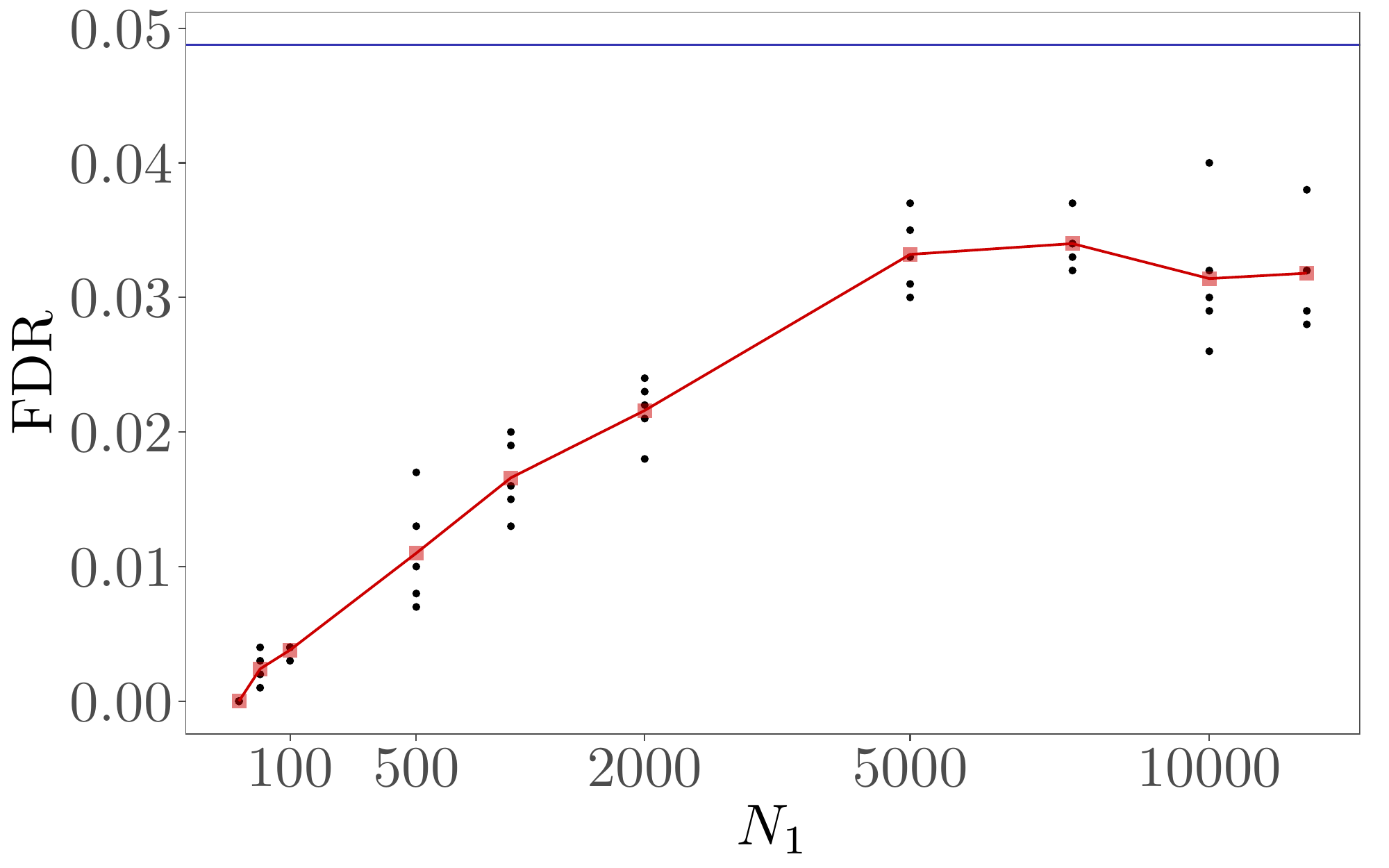}
	\caption{\label{fig:independent-simulation-convergence} Direct simulation of the distance matrix using independent normal draws from a normal distribution and $p = 0$, performed as five batches of $B = 1000$ runs each (shown in black).
		The average across all five batches is shown in red.
		The blue line at the top of the plot shows $1 - e^{-\alpha_0}$ for $\alpha_0 = 0.05$.
	}
\end{figure}

\subsubsection{Results under some mislabeling, (case  $p > 0$)}
To develop context regarding the results with some mislabeling (i.e. $p > 0$), we first consider for a moment a trivial filtering procedure which retains all $N_1$ instances in $C_1$.
Since the $pN_1$ incorrect instances are retained, the FOR of the trivial procedure is $pN_1 / N_1 = p$.
Clearly, \emph{an FOR of $p$ can be achieved without performing any filtering at all simply by returning all $N_1$ instances in $C_1$}.
For a testing procedure to improve upon this nominal level, it must result in an FOR below $p$.
With  $p > 0$, the FDR calculates the proportion of correctly labeled instances among those removed.
However, in the context of proteomics, where the set of retained peptides are used in subsequent analyses, we found that the specificity was a more relevant metric.
Consequently, the FOR and specificity are the primary statistics used to evaluate our proposed testing procedure in the presence of labeling errors ($p > 0$), with \pc providing a standardized method of evaluating the decrease in FOR in a manner independent of $p$.

\begin{figure}
	\centering
	\subfloat[$p=0.05$]{\includegraphics[width = .49\textwidth]{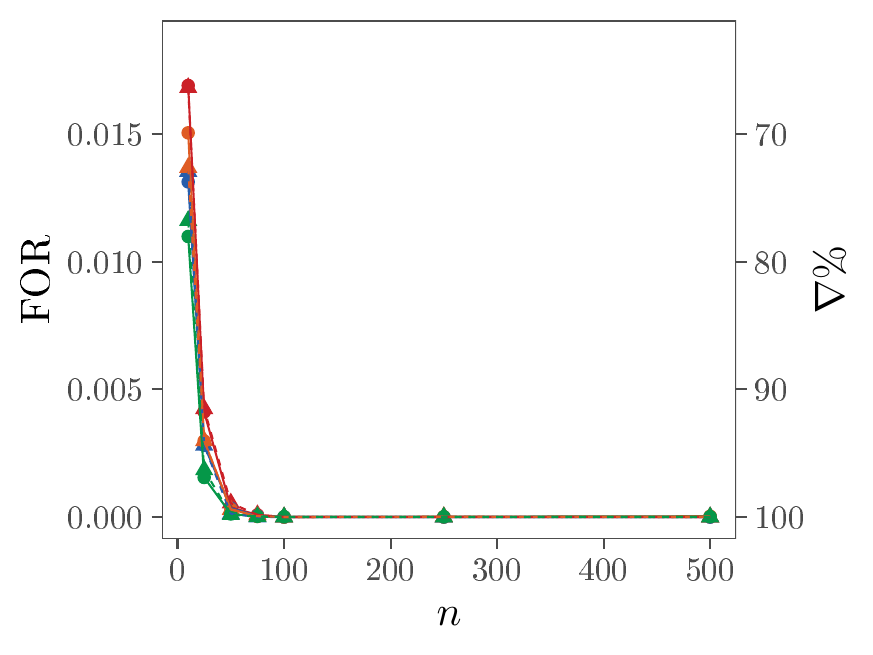} \label{subfig:for_p5}}
	\subfloat[$p=0.10$]{\includegraphics[width = .49\textwidth]{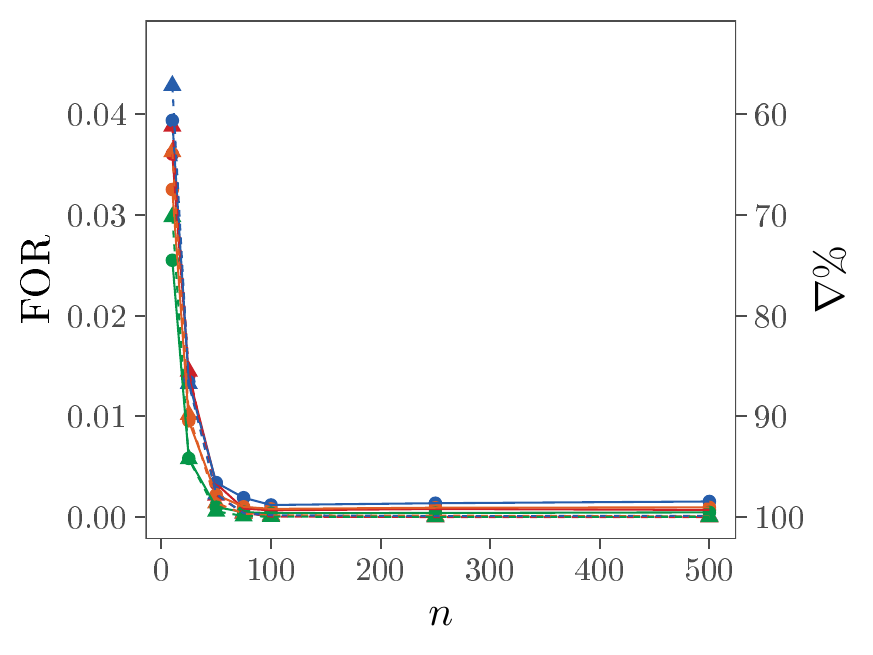} \label{subfig:for_p10}}\\
	\subfloat[$p=0.20$]{\includegraphics[width = .49\textwidth]{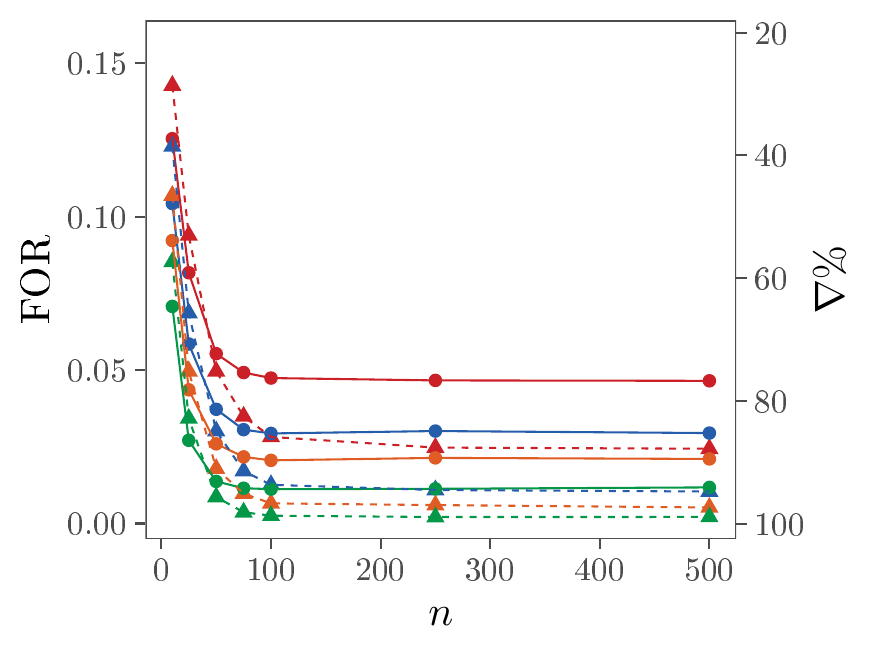} \label{subfig:for_p20}}
	\subfloat[$p=0.25$]{\includegraphics[width = .49\textwidth]{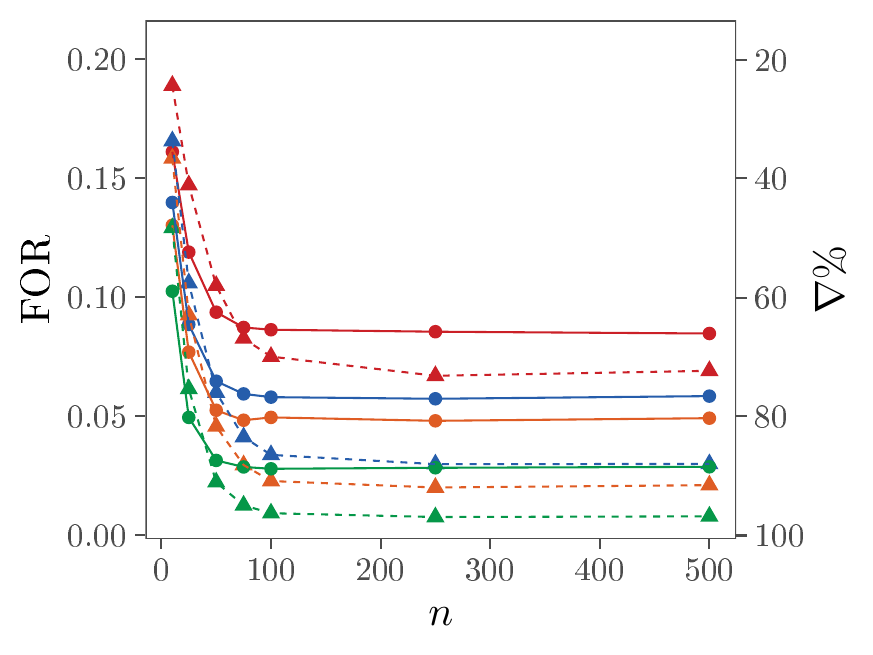} \label{subfig:for_p25}}
	
	\includegraphics[width = \textwidth]{pg0_legend-only}
	\label{subfig:rho12=.2-rho2=.5-p0-sim_fdr}

	\caption{\label{fig:sim_p>0_for} The FOR as a function of $n$ and $N_1$ for $p = 0.05, 0.10, 0.20,$ and $0.25$ with $\rho_{12}= 0.2$.}
\end{figure}

Figure \ref{fig:sim_p>0_for} provides the average FOR and \pc over the $B =1000$ simulation runs as a function of $n$ for $\rho_{12} = 0.2$ and varying combinations of $N_1$, $\rho_2$ and $p$.
In all cases, the procedure reduced the FOR relative to $p$ (that is, $\pc > 0$ for all results).
Small values of $n$ produced the smallest reduction (highest FOR), but this converged at approximately $n = 100$ to a value dependent on $p$, $N_1$, and $\rho_2$.
Table \ref{tab:sim_convergence} shows how the average FOR compares between small sample sizes ($n = 10$) and large sample sizes (averaged across $n = 250$, 500, and 700) at each value of $p$, $N_1$, and $\rho_2$.
For $p = 0.05$, the FOR converged to 0 for all values of $N_1$ and $\rho_2$.
For higher values of $p$, decreasing $N_1$ and decreasing $\rho_2$ caused the FOR to be higher for $n \ge 250$.
For example, when $p = 0.25$ and $N_1 = 500$, for example, the instances remaining in the class after filtering will still include \SI{10.26}{\percent} (\pc = 59.0) mislabeled instances for $n = 10$ and \SI{2.85}{\percent} (\pc = 88.6) mislabeled instances when $n$ is large.  
These are both substantial decreases from \SI{25}{\percent} mislabeled instances as seen in the unfiltered data.
\begin{table}
	\centering
	\caption{\label{tab:sim_convergence} The mean $\for$ and $\pc$ at $n = 10$ and $n \ge 250$ for each combination of $p$, $N_1$, and $\rho_{2}$ at $\rho_{12}= 0.2$.
	}
	\begin{tabular}{rrrrrrrrrr}
		\toprule
		\multicolumn{1}{c}{} & \multicolumn{1}{c}{} & \multicolumn{4}{c}{$\rho_{2} = 0.2$} & \multicolumn{4}{c}{$\rho_{2} = 0.5$} \\
		\cmidrule(l{3pt}r{3pt}){3-6} \cmidrule(l{3pt}r{3pt}){7-10}
		\multicolumn{1}{c}{} & \multicolumn{1}{c}{} & \multicolumn{2}{c}{$n = 10$} & \multicolumn{2}{c}{$n \ge 250$} & \multicolumn{2}{c}{$n=10$} & \multicolumn{2}{c}{$n \ge 250$} \\
		\cmidrule(l{3pt}r{3pt}){3-4} \cmidrule(l{3pt}r{3pt}){5-6} \cmidrule(l{3pt}r{3pt}){7-8} \cmidrule(l{3pt}r{3pt}){9-10}
		$p$ & $N_1$ & FOR & $\%\Delta$ & FOR & $\%\Delta$ & FOR & $\%\Delta$ & FOR & $\%\Delta$\\
		\midrule
		& 25 & 0.0169 & 66.2 & 0.0000 & 100.0 & 0.0168 & 66.4 & 0.0000 & 100.0\\
		
		& 50 & 0.0131 & 73.8 & 0.0000 & 100.0 & 0.0135 & 72.9 & 0.0000 & 100.0\\
		
		& 100 & 0.0150 & 69.9 & 0.0000 & 100.0 & 0.0137 & 72.6 & 0.0000 & 100.0\\
		
		\multirow[t]{-4}{*}{\raggedleft\arraybackslash 0.05} & 500 & 0.0110 & 78.0 & 0.0000 & 100.0 & 0.0116 & 76.8 & 0.0000 & 100.0\\
		\cmidrule{1-10}
		& 25 & 0.0361 & 63.9 & 0.0007 & 99.3 & 0.0388 & 61.2 & 0.0000 & 100.0\\
		
		& 50 & 0.0394 & 60.6 & 0.0015 & 98.5 & 0.0429 & 57.1 & 0.0001 & 99.9\\
		
		& 100 & 0.0325 & 67.5 & 0.0009 & 99.1 & 0.0363 & 63.7 & 0.0001 & 99.9\\
		
		\multirow[t]{-4}{*}{\raggedleft\arraybackslash 0.10} & 500 & 0.0255 & 74.5 & 0.0004 & 99.6 & 0.0298 & 70.2 & 0.0000 & 100.0\\
		\cmidrule{1-10}
		& 25 & 0.0585 & 61.0 & 0.0047 & 96.8 & 0.0635 & 57.7 & 0.0006 & 99.6\\
		
		& 50 & 0.0637 & 57.5 & 0.0068 & 95.5 & 0.0694 & 53.8 & 0.0011 & 99.3\\
		
		& 100 & 0.0585 & 61.0 & 0.0061 & 95.9 & 0.0659 & 56.1 & 0.0010 & 99.3\\
		
		\multirow[t]{-4}{*}{\raggedleft\arraybackslash 0.15} & 500 & 0.0446 & 70.3 & 0.0032 & 97.8 & 0.0511 & 66.0 & 0.0003 & 99.8\\
		\cmidrule{1-10}
		& 25 & 0.1254 & 37.3 & 0.0468 & 76.6 & 0.1427 & 28.6 & 0.0249 & 87.6\\
		
		& 50 & 0.1042 & 47.9 & 0.0300 & 85.0 & 0.1229 & 38.6 & 0.0108 & 94.6\\
		
		& 100 & 0.0922 & 53.9 & 0.0214 & 89.3 & 0.1069 & 46.6 & 0.0057 & 97.2\\
		
		\multirow[t]{-4}{*}{\raggedleft\arraybackslash 0.20} & 500 & 0.0707 & 64.6 & 0.0115 & 94.2 & 0.0852 & 57.4 & 0.0021 & 98.9\\
		\cmidrule{1-10}
		& 25 & 0.1612 & 35.5 & 0.0854 & 65.8 & 0.1891 & 24.4 & 0.0678 & 72.9\\
		
		& 50 & 0.1399 & 44.0 & 0.0574 & 77.1 & 0.1657 & 33.7 & 0.0304 & 87.8\\
		
		& 100 & 0.1303 & 47.9 & 0.0485 & 80.6 & 0.1584 & 36.6 & 0.0205 & 91.8\\
		
		\multirow[t]{-4}{*}{\raggedleft\arraybackslash 0.25} & 500 & 0.1026 & 59.0 & 0.0285 & 88.6 & 0.1292 & 48.3 & 0.0078 & 96.9\\
		\bottomrule
	\end{tabular}
	
\end{table}

\begin{figure}
	\centering
	\subfloat[$p=0.05$]{\includegraphics[width = .49\textwidth]{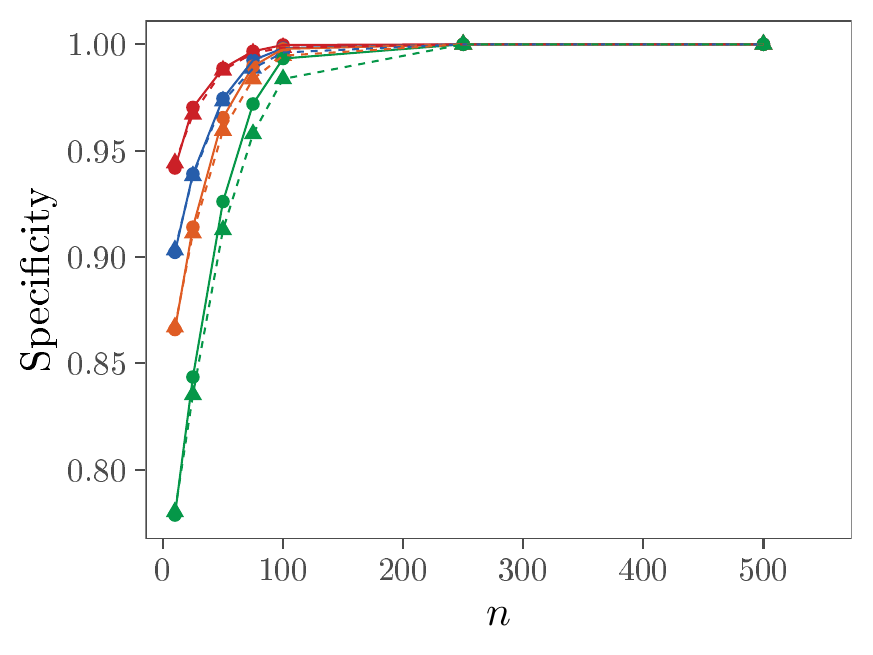} \label{subfig:for-specificity_p5}}
	\subfloat[$p=0.10$]{\includegraphics[width = .49\textwidth]{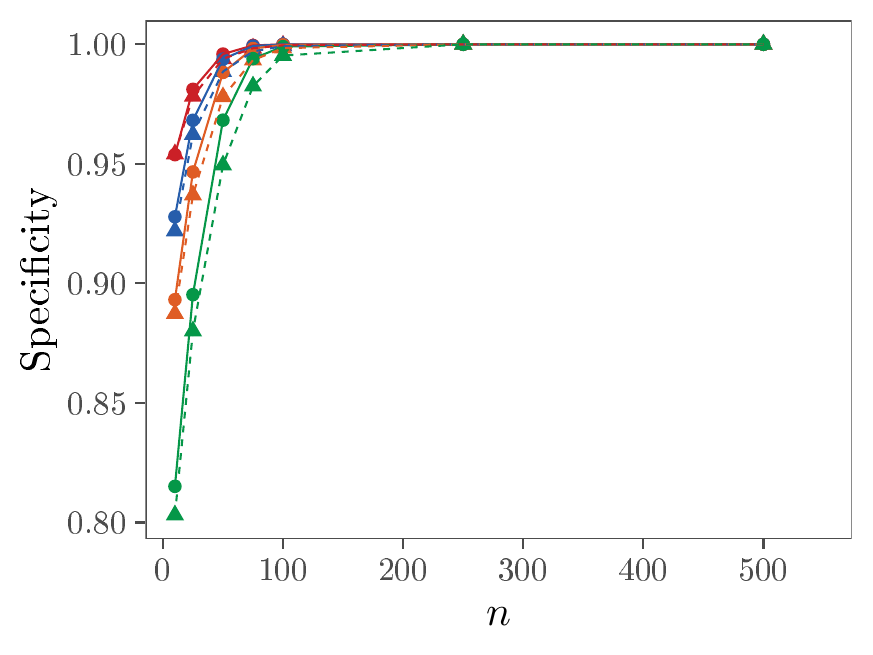} \label{subfig:for-specificity_p10}}\\
	\subfloat[$p=0.15$]{\includegraphics[width = .49\textwidth]{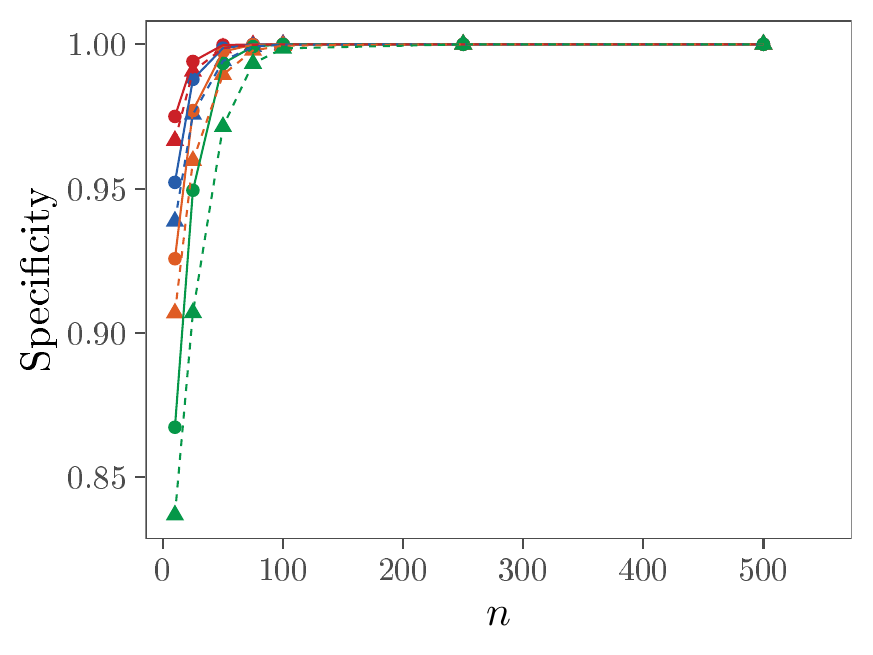} \label{subfig:for-specificity_p20}}
	\subfloat[$p=0.20$]{\includegraphics[width = .49\textwidth]{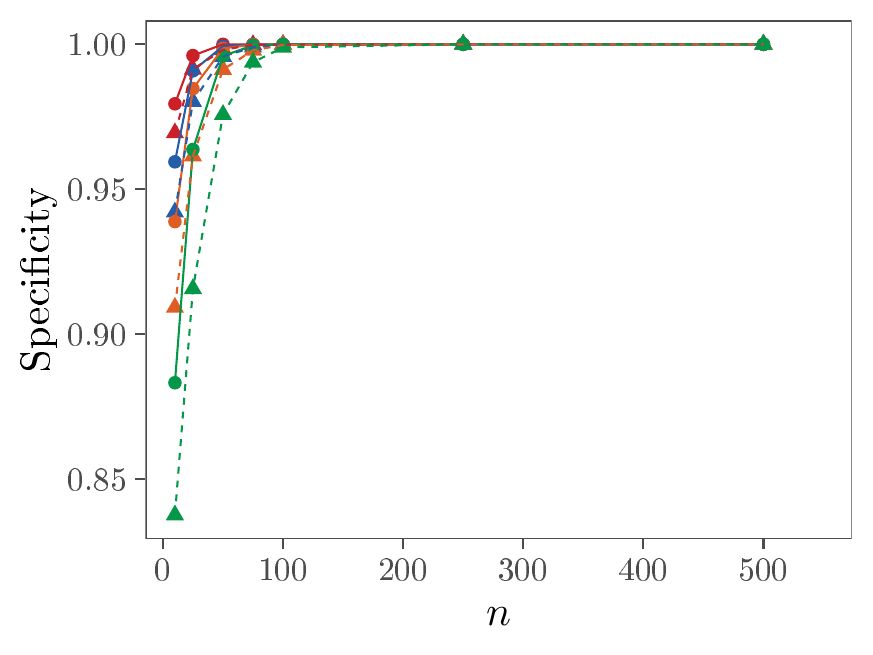} \label{subfig:for-specificity_p25}}
	
	\includegraphics[width = \textwidth]{pg0_legend-only}
	\caption{
		\label{fig:sim_p>0_specificity}
		The specificity as a function of $n$ and $N_1$ for $p = 0.05, 0.10, 0.20,$ and $0.25$ with $\rho_{12}= 0.2$.
	}
\end{figure}

Figure \ref{fig:sim_p>0_specificity} provides the average specificity over the 1000 simulation runs as a function of $n$ for $\rho_{12} = 0.2$ and varying combinations of $N_1$, $\rho_2$ and $p$.
The specificity always converged to 1 as $n$ increased, with convergence by $n = 250$ in all cases.  For larger values of $p$, convergence was faster (by $n = 100$).
For small values of $n$, a higher specificity is observed when $\rho_2$ and $N_1$ is smaller, although even in the worst case, the specificity was greater than 0.75.

Table \ref{tab:sim-pg0-results-n50} gives the average estimate of the FDR, FOR, \pc, sensitivity, and specificity in the case where $n = 50$ and $\rho_{12} = 0.2$ across all combinations of $p$ and $N_1$.
As already noted above, the table shows that the FOR and sensitivity decrease as $p$ and $N_1$ increase.
The FDR gives the proportion of correctly labeled instances among all the removed instances.
This measure increases as a function of $N_1$, corresponding to the decrease in the specificity of the procedure resulting in more correctly labeled instances being filtered out.
On the other hand, it decreases as a function of $p$ due to the increased proportion of mislabeled instances available to be removed.

The sensitivity of the procedure, as discussed above, gives the proportion of mislabeled instances that are detected out of all mislabeled instances.
This measure is increases when $p$ is small and $N_1$ is large, and decreases for large $p$ and small $N_1$; corresponding to the FOR estimate.
For example, when $p = 0.20$, $N_1 = 500$, and $\rho_2 = 0.2$ the data consists of 100 mislabeled instances and 400 correctly labeled instances.
On average, the procedure removed \SI{94.5}{\percent} of mislabeled instances and only \SI{0.7}{\percent} correctly labeled instances, based on the reported sensitivity and specificity.
Thus, the resulting filtered data set has an average of 5.5 mislabeled instances and 397.2 correctly labeled instances for an average of 402.7 total instances.
This reflects a decrease in the proportion of mislabeled instances by \SI{93.15}{\percent}: while \SI{20}{\percent} of the original data set was mislabeled, only \SI{1.4}{\percent} of the filtered data set remains mislabeled.
 
\begin{table}
	\centering
	\small
	\caption{\label{tab:sim-pg0-results-n50} The $\fdr$, $\for$, sensitivity (Sens.), and specificity (Spec.) of the algorithm using simulated data with $n = 50$ and $\rho_{12} = 0.2$.}
	\begin{tabular}{rrrrrrrrrrrr}
		\toprule
		\multicolumn{1}{c}{} & \multicolumn{1}{c}{} & \multicolumn{5}{c}{$\rho_{2} = 0.2$} & \multicolumn{5}{c}{$\rho_{2} = 0.5$} \\
		\cmidrule(l{3pt}r{3pt}){3-7} \cmidrule(l{3pt}r{3pt}){8-12}
		$p$ & $N_1$ & FDR & FOR & $\%\Delta$ & Sens. & Spec. & FDR & FOR & $\%\Delta$ & Sens. & Spec.\\
		\midrule
		& 25 & 0.673 & 0.000 &  &  & 0.965 & 0.630 & 0.000 &  &  & 0.969\\
		
		& 50 & 0.993 & 0.000 &  &  & 0.933 & 0.980 & 0.000 &  &  & 0.936\\
		
		& 100 & 1.000 & 0.000 &  &  & 0.894 & 1.000 & 0.000 &  &  & 0.900\\
		
		\multirow[t]{-4}{*}{\raggedleft\arraybackslash 0.00} & 500 & 1.000 & 0.000 &  &  & 0.807 & 1.000 & 0.000 &  &  & 0.813\\
		\cmidrule{1-12}
		& 25 & 0.131 & 0.000 & 99.242 & 0.991 & 0.989 & 0.138 & 0.001 & 98.912 & 0.987 & 0.988\\
		
		& 50 & 0.310 & 0.000 & 99.305 & 0.992 & 0.975 & 0.306 & 0.000 & 99.745 & 0.997 & 0.973\\
		
		& 100 & 0.349 & 0.000 & 99.384 & 0.994 & 0.965 & 0.372 & 0.000 & 99.422 & 0.995 & 0.959\\
		
		\multirow[t]{-4}{*}{\raggedleft\arraybackslash 0.05} & 500 & 0.542 & 0.000 & 99.763 & 0.998 & 0.926 & 0.559 & 0.000 & 99.819 & 0.999 & 0.913\\
		\cmidrule{1-12}
		& 25 & 0.035 & 0.003 & 96.689 & 0.961 & 0.996 & 0.049 & 0.002 & 97.840 & 0.974 & 0.994\\
		
		& 50 & 0.046 & 0.003 & 96.583 & 0.969 & 0.994 & 0.080 & 0.002 & 97.726 & 0.980 & 0.989\\
		
		& 100 & 0.086 & 0.002 & 97.793 & 0.980 & 0.988 & 0.144 & 0.001 & 98.609 & 0.988 & 0.978\\
		
		\multirow[t]{-4}{*}{\raggedleft\arraybackslash 0.10} & 500 & 0.197 & 0.001 & 99.018 & 0.992 & 0.968 & 0.268 & 0.001 & 99.431 & 0.995 & 0.949\\
		\cmidrule{1-12}
		& 25 & 0.013 & 0.010 & 93.127 & 0.921 & 0.998 & 0.028 & 0.008 & 94.708 & 0.940 & 0.996\\
		
		& 50 & 0.016 & 0.011 & 92.580 & 0.930 & 0.997 & 0.040 & 0.007 & 95.421 & 0.957 & 0.993\\
		
		& 100 & 0.027 & 0.009 & 93.669 & 0.946 & 0.995 & 0.072 & 0.006 & 95.928 & 0.966 & 0.985\\
		
		\multirow[t]{-4}{*}{\raggedleft\arraybackslash 0.15} & 500 & 0.069 & 0.005 & 96.927 & 0.974 & 0.986 & 0.145 & 0.003 & 98.166 & 0.986 & 0.965\\
		\cmidrule{1-12}
		& 25 & 0.001 & 0.055 & 72.335 & 0.760 & 1.000 & 0.009 & 0.050 & 75.199 & 0.784 & 0.999\\
		
		& 50 & 0.006 & 0.037 & 81.397 & 0.844 & 0.999 & 0.027 & 0.030 & 84.936 & 0.874 & 0.994\\
		
		& 100 & 0.011 & 0.026 & 87.029 & 0.893 & 0.997 & 0.042 & 0.018 & 91.068 & 0.928 & 0.989\\
		
		\multirow[t]{-4}{*}{\raggedleft\arraybackslash 0.20} & 500 & 0.026 & 0.014 & 93.150 & 0.945 & 0.993 & 0.094 & 0.009 & 95.687 & 0.967 & 0.971\\
		\cmidrule{1-12}
		& 25 & 0.000 & 0.094 & 62.489 & 0.668 & 1.000 & 0.017 & 0.105 & 58.032 & 0.619 & 0.998\\
		
		& 50 & 0.003 & 0.065 & 74.092 & 0.779 & 0.999 & 0.022 & 0.060 & 75.990 & 0.793 & 0.996\\
		
		& 100 & 0.005 & 0.053 & 78.976 & 0.833 & 0.999 & 0.032 & 0.046 & 81.705 & 0.856 & 0.991\\
		
		\multirow[t]{-4}{*}{\raggedleft\arraybackslash 0.25} & 500 & 0.013 & 0.031 & 87.412 & 0.903 & 0.996 & 0.067 & 0.022 & 91.073 & 0.934 & 0.976\\
		\bottomrule
	\end{tabular}
	
\end{table}

\section{Discussion} \label{sec:discussion}
In this paper, we have presented a testing procedure for identifying incorrectly labeled instances when two or more classes are present.
Our non-parametric approach, which requires very few assumptions, yields a very high specificity, and can be implemented very easily and efficiently using standard statistical software.

As demonstrated in the simulation study, our testing procedure has a high specificity and low FOR provided the number of measurements ($n$) on each instance is large.
Decreasing the value of $n$ yields a more conservative test (i.e., one which is less likely to reject the null hypothesis in (\ref{H0i}) and remove instances), since each distance is measured less precisely.
Even for extremely small values of $n$, it was still possible to reduce the FOR and maintain a specificity over \SI{80}{\percent} for all classes.

For a fixed  $n$, classes with fewer instances (i.e. small values of $N_1$) had a higher FOR and specificity relative to larger classes.
Compared to ``classical'' classification algorithms, however, such a property is relatively unique.
Many of the existing classification procedures found in the literature have difficulties when the number of instances varies substantially across classes \citep{Sun2009}.  
This difficulty extends to those procedures utilizing these classification approaches to also detect mislabeled instances\footnote{These findings will be published in our subsequent work}.
Our testing procedure avoids this problem and maintains the integrity of small classes by analyzing each using a ``one-vs-all'' strategy  that is most conservative  for small values of $N_1$ (say for $25 \le N_1 \le 50$).

On the other hand, when the number of instances is extremely low ($2 \le N_1 \le 25$ based on our simulation studies) the accuracy of the non-parametric estimates, especially $\hat\tau$, become unreliable.
In the most extreme cases, the available data is insufficient to ever reject the null hypothesis even if  a reliable estimate of $\tau$ could be found.
For example, in LC-MS/MS proteomics, extremely small proteins ($2 \le N_1 \le 5)$ often consist entirely of inaccurate or mislabeled peptides and make up a substantial proportion of the reported proteins.
This will be addressed in a subsequent work using a complementary procedure where instances are only retained if the null hypothesis is rejected.

The use of a Bonferroni-type procedure is aimed at protecting against removing correctly identified instances is extremely conservative, prioritizing a high specificity at a cost of a higher FOR.
Even in this conservative case, the FOR in our simulations was universally reduced across all values of $N_1$, $n$, $\mathbf \rho$ and $p$.
Less conservative FWER procedures, FDR-type procedures \citep{Benjamini:1995,Benjamini2001}, or procedures seeking to explicitly control the FOR could also be considered to further reduce the FOR in the filtered data.
The primary convenience of the Bonferroni procedure is its universal applicability, especially in light of the complexity of the dependency structure of the distances and consequently of the tests statistics, $Z_i$.

Because the testing procedure estimates $\bar G$ and $\bar F$ non-parametrically using the available data, these estimates are affected by the presence of mislabeled instances. 
The resulting estimates, $\hat\tau$ and $\hat t^*$ of $\tau$ and $t^*$, may therefore be biased when mislabeled instances are included in the class.
One possible method of remedy is to iteratively remove a small number of instances and re-estimate $\tau$ and $t^*$ until some stopping criterion is met.
Developing such a sequential estimation procedure is left to future work.

Although we have used correlation as a measure distance, the procedure is generally applicable whenever the observed data for each instance can be effectively combined as a "measure of distance".
In our subsequent work, we also include a demonstration of the procedure using Manhattan distance.

\section*{Acknowledgement(s)}

M. Key would like to thank Dr. Susanne Ragg for her assistance and support for this work.

\section*{Disclosure statement}

N.A.

\bibliographystyle{tfs}
\bibliography{references}

\appendix
\section{Proof of Lemma \ref{lem:aControlsb}}\label{app:lemma2}

\aControlsb*
\begin{proof}

	Let $b(k, p, n) = \sum_{j = 0}^{k} {n \choose j} p^j(1 - p)^{n-j})$ be the binomial sum up to $k < n$.
	This is clearly a polynomial of degree $n$, and thus continuous on $\R$, and specifically on $(0,1)$.
	For $\tau > \tau^*$, let $Z \sim Bin\left(\frac{N_1-1}{2}, \tau\right)$ and set $X \sim Bin\left(\frac{N_1-1}{2}, \tau^*\right)$.
	Because $\tau > \tau^*$, 
	\begin{equation*}
	\Pr\left(Z \le \frac{N_1-1}{2}\middle| \tau\right) < \Pr\left(X \le \frac{N_1-1}{2}\middle |\tau^*\right) = \Pr(Z \le a_\alpha | \tau),
	\end{equation*}
	 by \citep{Gilat1977}, so $\Pr\left(Z \le \frac{N_1-1}{2}\middle| \tau\right) < \Pr(Z \le a_\alpha | \tau)$ and thus $a_\alpha >  \frac{N_1-1}{2}$.\\[.5ex]
	
	Since $a_\alpha \ge \frac{N_1 - 1}{2}$, then $N_1 - 1 - a_\alpha \le \frac{N_1 - 1}{2}$.
	Consequently,
	\begin{align*}
	\beta &= \Pr(Z_i \ge a_\alpha | 1 - \tau) \\
	& =  \Pr(Z_i \le N_1 - 1 - a_\alpha | \tau) \\
	& \le \Pr(Z_i \le a_\alpha | \tau) \\ &	 = \alpha.
	\end{align*}
	\end{proof}	
\end{document}